\theoremstyle{definition}
\newtheorem{Definition}{Definition}[section]
\theoremstyle{plain}
\newtheorem{Lemma}{Lemma}[section]
\newtheorem{Conjecture}{Conjecture}[section]
\newtheorem{Statement}{Statement}[section]
\newtheorem{Proposition}{Proposition}[section]
\newtheorem{Theorem}{Theorem}[section]
\newtheorem{Corollary}{Corollary}[section]
\theoremstyle{remark}
\newtheorem*{Remark}{Remark}
\newenvironment{innerproof}
 {\proof}
 {\endproof}
\newcommand{\eqnb}{\begin{equation}}
\newcommand{\eqn}{\end{equation}}
\newcommand{\h}{\hbar}
\newcommand{\hh}{\hat{h}}
\newcommand{\Al}{\mathcal{A}l}
\newcommand{\K}{\mathcal{K}}
\newcommand{\A}{\mathcal{A}}
\newcommand{\fullKP}{\widehat{\mathcal{KP}}}
\newcommand{\KP}{\mathcal{KP}}
\newcommand{\Y}{\mathcal{Y}}
\DeclareMathOperator{\Span}{span}
\begin{document}

\title{\vspace{0.1cm}{\Large {\bf 
Perturbative analysis of the colored  Alexander polynomial and KP soliton $\tau$-functions  }\vspace{.2cm}}
\author{
	{\bf V. Mishnyakov $^{a,c,d}$}\thanks{mishnyakovvv@gmail.com},
	{\bf A. Sleptsov$^{a,b,d}$}\thanks{sleptsov@itep.ru}} \date{ }%
}
\maketitle

\vspace{-5.5cm}

\begin{center}
	\hfill MIPT/TH-07/19\\
	\hfill IITP/TH-08/19\\
	\hfill ITEP/TH-13/19
\end{center}

\vspace{3.3cm}

\begin{center}
	$^a$ {\small {\it ITEP, Moscow 117218, Russia}}\\
	$^b$ {\small {\it Institute for Information Transmission Problems, Moscow 127994, Russia}}\\
	$^c$ {\small {\it Moscow State University, Physical Department,
			Vorobjevy Gory, Moscow, 119899, Russia }}\\
	$^d$ {\small {\it Moscow Institute of Physics and Technology, Dolgoprudny 141701, Russia	 }}
\end{center}

\vspace{1cm}

\begin{abstract}
In this paper we study the group theoretic structures of colored HOMFLY polynomials in a specific limit. The group structures arise in the perturbative expansion of $SU(N)$ Chern-Simons Wilson loops, while the limit is $N \rightarrow 0$. The result of the paper is twofold. First, we explain the emergence of Kadomsev-Petviashvily (KP) $\tau$-functions. This result is an extension of what we did in \cite{MMMMS}, where a symbolic correspondence between KP equations and group factors was established. In this paper we prove that integrability of the colored Alexander polynomial is due to it's relation to soliton $\tau$-functions. Mainly, the colored Alexander polynomial is embedded in the action of the KP generating function on the soliton $\tau$-function. Secondly, we use this correspondence to provide a rather simple combinatoric description of the group factors in term of Young diagrams, which is otherwise described in terms of chord diagrams, where no simple description is known. This is a first step providing an explicit description of the group theoretic data of Wilson loops, which would effectively reduce them to a purely topological quantity, mainly to a collection of Vassiliev invariants.
\\\\
\textsc{MSC2020 Subject Classification}: 57K14, 57K16, 37K10, 81R12.\\\\
\textsc{Keywords}: HOMFLY-PT polynomial, Alexander polynomial, KP hierarchy, Chern-Simons theory, Kontsevich integral, Finite type invariants.

\end{abstract}

\vspace{.5cm}

%\setcounter{section}{-1}

%\tableofcontents
\section{Introduction}

Wilson loops are important observables in quantum gauge theory, for example, because of their possible role in 4d confinement \cite{Pol}. In three dimensional topological theories  like the Chern-Simons theory \cite{Witten,ChernSimons,CS} one can thoroughly investigate properties of Wilson loops. In this case they provide knot invariants such as the Jones, HOMFLY or Kauffman polynomials depending on the gauge group \cite{Witten}. Apart from Chern-Simons theory knot invariants appear in various branches of mathematical physics such as quantum groups \cite{RT}, lattice models \cite{StatPhys}, WZW models \cite{WZNW}, topological strings \cite{TopStrings},
quantum computing \cite{MMMMM} etc. and hence attracts a lot of attention nowadays.\\\\
Among the important properties of QFT observables are the relations they satisfy \cite{Morozov:2013xra}. Most commonly studied are the Ward identities, which reflect the independence of the path integral on the integration variables. In this paper we are going to focus on the representation theoretic properties of Wilson loops. Actually, since the Chern-Simons theory is topological, for a given knot the only non-trivial dependence of the observable is the representation. Globally we are aiming for a full description of the group-theoretic dependence in the perturbative expansion of the Wilson loop. It is tempting to find a set of representation theoretic properties which would fix the possible structures appearing in the coefficients of the perturbative expansion. In fact this question is not unique to Chern-Simons theory, dissecting the group theory structures is important for any non-abelian theory \cite{FD}, see for example a recent paper for a somewhat similar discussion of $\mathcal{N}=4$ SYM Wilson loops  \cite{N=4SYM}. The specific case of Chern-Simons theory is distinguished due its topological nature, which makes a general investigation easier and also establishes a connection to various problems in knot theory. Moreover, we are able to use results from the theory of quantum knot invariants, where various methods for exact calculations where developed.
\\\\
The fact that knot ivariants such as the HOMFLY polynomial are \emph{exactly} calculable allows to search for their integrable properties. An obvious relation to quantum integrable systems is provided by the $R$-matrix/braid representation formulation of knot invariants. This connects knot invariants to state-sum and vertex models. We are going to discuss a complementary idea of classical KP integrability, which was hinted in our previous paper \cite{MMMMS}. Remarkably, this emergent KP hierarchy is also related to the special representation dependence. And since the integrable KP hierarchy is closely related to the representation theory of the $gl(\infty)$ algebra, our research allows us to hope to obtain a complete answer for the Wilson loop in terms of integrability.
\\\\
The integrable hierarchy of nonlinear partial differential equations Kadomtsev-Petviashvili (KP hierarchy) arose from the Kadomtsev-Petviashvili equation, which generalizes the Korteweg-de Vries equation to 2 spatial dimensions. Initially, this equation described nonlinear waves (solitons) in various media, for example, on water or in a ferromagnetic medium. However, over the past 30 years, it has been found that many partition functions in theoretical physics also satisfy this hierarchy. Among the well-known examples are Kontsevich matrix model \cite{Kont,KhMMMZ}, which describes 2d topological gravity and coincides with a partition function of physical 2d quantum gravity according to Witten's conjecture \cite{Witgrav}; the matrix Brezin-Gross-Witten model  describing the partition function of lattice QCD with Wilson action \cite{BG,GW}; the partition function of the 2d Yang-Mills theory with the gauge group $U(N)$ \cite{Rusakov}; the Hermitian matrix model, which is dual to the  Jackiw-Teitelboim gravity in a particular regime \cite{JT}, which describes 2d dilaton gravity; conformal blocks of 2d conformal Liouville field theory and Nekrasov instanton partition function by virtue of AGT relation \cite{Mironov:2017lgl}.
\\\\
Another long standing problem in knot theory is the problem of describing the so called Vassiliev invariants \cite{ChmutovDuzhin} (or finite-type invariants). These are rational valued invariants graded by a natural number called their order, which are thought to describe a complete set of knot invariants. They are known to have combinatoric description in terms of chord diagrams and a lot of theorems have been proven in this regard. However not much is known about their particular values. Moreover even the number of independent invariants at a given order is unknown. Various asymptotic bounds are given in mathematical literature. Physically, they appear in the perturbative expansion of Wilson loops describing its particular embedding (up to ambient isotopy) in the given 3-dimensional manifold, and are given by tedious integrals, which are of course impossible to calculate \cite{La,Labast,DBSlSm}. The only practical method of finding Vassiliev invariants is by using exact results for HOMFLY polynomials given by R-matrix calculations. Two problems arise on this way. The obvious one is the complexity of calculation of HOMFLY polynomials for large representations. However, even knowing the answer for a given representation one has to know the independent group-theoretic structures in order to distinguish independent Vassiliev invariants.
\\\\
Our paper is organized as follows.
After a brief review of properties of Wilson loops and knot invariants (sec.\ref{sec:2}) we outline the results of the paper in sec. \ref{sec:3}. There we discuss the qualitative aspects of the statements, their importance and physical consequences.
The rest of the paper is devoted to a more mathematical exposition and proofs of the statements. The colored Alexander polynomial is shortly introduced as a generalization of the usual topologically defined Alexander polynomial in section \ref{sec:4}. In section \ref{sec:5} we study the perturbative expansion of the polynomial and define the Alexander system of equations. We are interested in studying the space of it's solutions. We derive properties of these solutions and the dimension of this space. We also mention an interesting connection to supersymmetric polynomials. Section \ref{sec:6} is devoted to the general properties of the KP equations and $\tau$-functions required for the formulation of our statement. Finally, in section \ref{sec:7} we will prove the main theorem stating the correspondence between the solutions of the Alexander equations and the soliton dispersion relations and the combinatoric description of perturbative group factors.
\section{Knot invariants}\label{sec:2}
As mentioned above HOMFLY polynomials are Wilson loop averages specifically for Chern-Simons theory with $SU(N)$ gauge group defined by the knot and the representation $R$:
\begin{equation}
H_R^\mathcal{K}(q,a)=   \left\langle \mathrm{ tr_R } \ P \mathrm{exp} \left(\oint A_\mu^a (x) T^a dx^\mu \ \right)  \right\rangle
\end{equation}
It is a polynomial of two variables:
$$q=e^\h, \ a=e^{N \h},\ \ \ \h:={2\pi i\over \kappa+N}.$$
with $\kappa$ being the Chern-Simons level. The perturbative expansion is recovered using the above parametrisation and expanding in $\hbar$. There are two natural limits one could take:
\begin{equation}
    q  \text{ - fixed} , \ A = 1 \qquad \text{or } \qquad q=1  , \ A  \text{ - fixed}
\end{equation}
The second limit corresponds to the t'Hooft planar limit $
\h \rightarrow 0 , N \rightarrow \infty$ such that $N\h = const$ and produces the so called special polynomials ${\cal H}^\K_R(1,a)=\sigma_R^\K(a)$. Their $R$-dependence has a simple form  \cite{DMMSS,Itoyama,Zhu,GenusExpansion}:
\begin{equation}\label{special}
    \sigma_R^\K(a)= \left( \sigma_{[1]}^\K(a) \right)^{|R|}.
\end{equation}
Hereafter, we identify the representation $R$ with the Young diagram associated with it: $R=\{R_i\},\ R_1\ge R_2\ge\ldots\ge R_{l(R)}$,
$|R|:=\sum_i R_i$. This limit exhibits KP-integrability \cite{inttorus}, namely one can construct a KP $\tau$-function. Indeed, let us consider the Ooguri-Vafa partition function \cite{GenusHurwitz}, which describes the expansion of the Chern-Simons theory around the trivial flat connection:
\begin{equation}
\label{OVpf}
\mathcal{Z}^{\K}(\bar{p}|a,q) =  \sum_R {\cal H}^{\K}_R(a,q)\,D_R\,\chi_R\{\bar{p}\}.
\end{equation}
In this formula, $\chi_R\{p\}$ is the Schur polynomial in the representation $R$, $D_R=\chi_R\{p^*\}$ with $p_k^*=\frac{a^k-a^{-k}}{q^k-q^{-k}}$ being the quantum dimension. Now we put $q=1$ in this formula and HOMFLY polynomial turns into the special polynomial. It turns out \cite{GenusExpansion} that the coefficients in front of Schur polynomials in formula \eqref{OVpf} satisfy the bilinear Plucker identities \emph{specifically due to their special representation property \eqref{special}}.
\\\\
In this paper we will fully focus on the second limit of the HOMFLY polynomial, which gives the so called Alexander polynomial:
\begin{equation}
    H_{R}^\mathcal{K}(q,1)=\mathcal{A}_{R}^\mathcal{K}(q)
\end{equation}
As outlined in \cite{MMMMS} the second limit may be called dual to the t'Hooft limit as requires $N \rightarrow 0$ and exhibits a "dual" property \cite{Itoyama,MM}\footnote{Let us note that our notion of coloured Alexander polynomial is totally different from that defined in \cite{Mur}.}:
\begin{equation}\label{alexproperty0}
\A^\K_R(q)=\A^\K_{[1]}(q^{\vert R\vert})\ ,\quad \text{for} \ R=[r,1^L].
\end{equation}
An important distinction, however, is that the property \eqref{special} holds for all representations, while \eqref{alexproperty0} only holds for single hook Young diagrams. In quantum field theory language this is the relation between Wilson loops in different representations (though in a specific limit) mentioned above. The goal of this paper is to explain the importance of this relation, the restriction it imposes on the structure of the coefficients of the loop expansion and its relation to KP-integrability. In our previous paper \cite{MMMMS} we have exposed a "phenomenological" correspondence, between these coefficients and equations of the KP hierarchy.
\\\\
We are going to study the structure of the loop expansion of HOMFLY polynomials. In mathematical literature it goes under the name "Kontsevich integral" \cite{Labast,Hidden,Sm}, which schematically can be written as follows:
\begin{equation}\label{Expansion}
{\cal H}_R^\K=\sum_n \left(\sum_j v_{n,j}^\mathcal{K}\, r_{n,j}^R\right) \h^n.
\end{equation}
A remarkable fact is that the knot dependence and the group theoretic dependence split explicitly. The group is represented by the so called group factors $r_{n,j}^R$. They appear in the path-integral as traces of $SU(N)$ generators 
\begin{equation}
r^R_{n,j} \sim \mathrm{tr}_R(T^{a_1}\ldots T^{a_n}).
\end{equation}
The knot dependent part is given by  Vassiliev invaraints $v_{n,j}^\mathcal{K}$, which are rational valued knot invariants \cite{Labast,CDIntro}. 
\\\\ 
The group factors $r^R_{n,j}$ are commonly studied in terms of algebras of Jacobi or chord diagrams with weights associated to them. Physically \cite{Labast} these diagrams are the Feynman diagrams appearing in the perturbative expansion of the Wilson loop. An elaborate discussion is given in \cite{CDIntro} (see sections 5,6). Jacobi diagrams and chord diagrams are algebras under tensor products satifying some complicated relations.
To construct knot invariants one associates Lie algebra weight systems to any Jacobi diagram in a manner which is clear from the following examples:

\begin{picture}(850,80)(-100,-70)
\put(-35,-22){$D_2=$}
\put(35,-22){$, \ \varphi_{\mathfrak{g}}(D_2)=\displaystyle \sum_{a,b,c=1}^{\dim \mathfrak{g}}T^a T^b T^c T^{a*} T^{b*} T^{c*}$}
\put(7,-34){\line(0,1){28}}
\put(23,-34){\line(0,1){28}}
\put(-1,-20){\line(1,0){32}}
\put(15,-20){\circle{30}}
\put(8,-18){\mbox{\fontsize{7.5}{7.5} $a$}}
\put(5,-30){\mbox{\fontsize{7.5}{7.5} $b$}}
\put(15,-14){\mbox{\fontsize{7.5}{7.5} $c$}}
\end{picture}
\begin{picture}(850,30)(-300,-50)
\put(-205,-22){$D_1=$}
\put(-160,-20){\circle{30}}
\put(-176,-20){\line(1,0){32}}
\put(-140,-22){$,\ \varphi_{\mathfrak{g}}(D_1)=\displaystyle \sum_{a=1}^{\dim \mathfrak{g}}T^a T^{a*} ;$}
\end{picture}
where $T^a$ are the generators of the Lie algebra $\mathfrak{g}$. This construction is naturally extended to trivalent diagrams, which are also called Jacobi diagrams. By theorem 6.1.2 in \cite{CDIntro} $\varphi$ gives a homomorphism to the center of the universal enveloping algebra $U(\mathfrak{g})$. The group factors are constructed in the same manner for a general representation by composing the representation with the homomophism $\varphi_{\mathfrak{g}}$ and taking the trace. As they are traces of the representation of the center $ZU(\mathfrak{g})$ they they can be expanded into the basis of the Casimir invariants of the algebra \cite{Labast}. Obviously exactly the same holds for the group factors of the Alexander polynomial. For a chord diagram $D$:
\begin{eqnarray}
r_{D}^R=\sum_{\Delta} \alpha_{\Delta,D} C_\Delta(R)
\end{eqnarray}

Let us stress that the precise dimension of the algebra of chord diagram for arbitrary $n$ is unknown, because the algebra relations become very complicated. The explicit basis is known up to the 9th order, obtained by linear algebra approach in \cite{BarNatan}, whereas the precise dimension is known up to the 12th order \cite{Kneissler}. Thus, in order to calculate, for example, the 10th order, one must first find a basis there (the dimension of the algebra at this level is 27). After that, for each diagram, it will be necessary to calculate its weight function for an arbitrary rank $N$ and an arbitrary finite-dimensional representation corresponding to the Young diagram $R$. This procedure seems possible, but extremely laborious and tedious. However, it is not yet possible to move beyond the 12th order in this way.
In this paper, we propose a completely different approach to calculating weight systems of $sl_N$, based on the symmetries of the Wilson correlators. Instead of a monstrous algebra of chord diagrams with obscure relations, we construct a tame ring of polynomials, which will be enumerated by simple combinatorial data (some particular subset of Young diagrams) and present explicit formulas for their calculation.

%_____________________________________%
%_____________________________________%

%_____________________________________%
%_____________________________________%
\section{Summary of results}\label{sec:3}
Let us first state and discuss the results of the paper omitting all details, which are presented in the forthcoming sections.
\\\\
Take the $R$-dependence of the Alexander polynomial \eqref{alexproperty0} as (one of) it's defining property. What could we then say about the group factors $r^R_{n,j}$? This defining property leads to nontrivial equations which group factors should satisfy. Our results are two properties of the solutions to these equations.
\paragraph{Description of group factors.}
It is complicated to extract independent group factors by direct calculation. For example in \cite{Labast} they are given to $6$'th order for fundamental representation of $SU(N)$ and for any spin representation of $SU(2)$. However, it is in fact a crucial problem, since finding independent group factors or at least enumerating them, means enumerating independent Vassiliev invariant. As discussed above, there is no dynamical data in the Wilson loop observables, only the representation theoretic and the topological. So in general, if one could list all the group factors for any Lie group, this would reduce knot invariants  to Vassiliev invariants, which is purely topological data.\\\\
The first result of the paper is a step in this direction, mainly we prove that \textbf{the independent solutions of \eqref{alexproperty0} are generated polynomials labeled by two-hook Young diagrams with two rows $\mathcal{Y}_2$:}
\begin{equation}\label{generators}
    X_{[y_1,y_2]}(R)= \dfrac{1}{y_1 y_2}C_{y_1}(R)C_{y_2}(R)+ \ldots , \qquad y_1 \geq y_2 \geq 2 , \quad [y_1,y_2] \in \mathcal{Y}_2
\end{equation}
where, the $C_k$ are $SU(N)$ Casimirs (see below). We do not have a closed form formula for these solutions, but they are constructed by a transformation from an explicit basis and specifically contain the term highlighted in \eqref{generators}.
The  \textbf{vector space basis $X_\lambda$ in the set of these solutions is labelled by Young diagrams with more then 1 hook -  $\mathcal{Y}$}. The way the basis $X_\lambda$ is constructed from $X_{[y_1,y_2]}$ follows from the main theorem of the paper.
\\\\
As a result we can write the following generating function:
\begin{equation}\label{AlexGen}
    \mathcal{X}(\hbar)= \sum_{\lambda \in \mathcal{Y}} X_{\lambda } v_\lambda(\hbar) 
\end{equation}
where $v_\lambda(\hbar)$ are generating functions of integers $v_{\lambda,n}$. 
There are two key statements to be made: 
\begin{itemize}
    \item We can observe experimentally, that the set of independent group factor of the Alexander polynomials is "smaller", then the described set of Young diagrams :
    \begin{equation}
        \Span \left\langle r^R_{n,j} \right\rangle 	\subsetneq \Span \left\langle  X_\lambda(R) \right\rangle
    \end{equation}
    To recover the Alexander polynommial for a given knot  from \eqref{AlexGen}  we should choose a specific point for $v_{\lambda,n}$, where some of then may vanish or be dependent:
    \begin{equation}\label{AlexfromX}
       \mathcal{A}_R^\K(q=e^\hbar) = \left.\mathcal{X}(\hbar)\right|_{v_{\lambda,n} = v_{\lambda,n}^\K}
    \end{equation}
    This also means that other yet unknown symmetry properties should exist even for the Alexander polynomial (in fact adding the rank-level duality discussed below, is not enough either.) We leave the question of finding such symmetries for future work.
    \item The important point is that cumbersome problem of describing the group factors via explicit evaluation of traces is actually described by  very simple combinatorics. In terms of chord diagrams/feynamn diagrams we could write \eqref{AlexGen} as:
    \begin{equation}
        \mathcal{X}=\sum_{D \in \mathcal{D}} X_{D } v_D(\hbar) 
    \end{equation}
    where now $\mathcal{D}$ would be the independent chord diagrams. The algebra of chord diagrams is complicated exactly due to the non-trivial relations among them. In fact even the dimension (the number of independent group factors) of this algebra is unknown, and only some asymptotic bounds exist.
\end{itemize}
 Our result means that there should be an explicit solution for the independent group factors expressed in simple terms of Young diagrams with more then 1 hook. Moreover, as said above, this set is even simpler as it is multiplicatively generated by \eqref{generators}.
\paragraph{Integrability.} 
Interestingly, we obtained the above results while examining the similarity between solutions $X_\lambda$ and and KP hierarchy equations that we observed in \cite{MMMMS}. A similarity between polynomials in Casimir eigenvalues and Hirota operators is not really satisfying, because Hirota operators do not form an algebra under multiplication. Upon further investigation we discovered, that the relation is precisely formulated in terms of soliton $\tau$-function.
Here we provide a proof that \textbf{the \emph{independent} solutions to property \eqref{alexproperty0} in each order $\hbar^n$ are in one-to-one correspondence with the dispersion relations of a one soliton $\tau$-function of the KP hierarchy.}\\ This is an explicit relation to a certain type of $\tau$ function. It is captured by the following equality:
\begin{equation}\label{corresp}
    \mathrm{KP}(z_i) \tau_{\text{soliton}}  \cdot  \tau_{\text{soliton}} = \mathcal{X}(\hbar)  \ \tau_{\text{soliton}}
\end{equation}
where by $ \mathrm{KP}(z_i)$ we denote is the generating function of \emph{all} KP equations, where Hirota equation appear as coefficients of $z_i$ expansion (see \eqref{kpgen} below) , and $ \tau_{\text{soliton}}$ is the soliton tau-function.  The identification will be explained below, while now we just mention, that the soliton momenta are identified with the Casimir eigenvalues:
\begin{equation}
    k_i=C_i(R)
\end{equation}
while the solution to the dispersion relations correspond to setting the representation to be a single-hook $R= [r,1^L]$.
\\\\
In fact to establish a correspondence to the Alexander polynomial, we should take the restriction \eqref{AlexfromX} in account. In terms of the KP hierarchy it should involve some restriction or modification of the soliton $\tau$-function. Anyway, since \eqref{AlexGen}  arises as a solution of certain properties of the $N=0$ limit of Wilson loops we would like to say, that by constructing \eqref{corresp} we explicitly demonstrate the emergence of KP integrability for Chern-Simons Wilson loops. 
%______________________________%

%______________________________%

%______________________________%

%______________________________%

%______________________________%

%______________________________%

%______________________________%

%______________________________%

%______________________________%

\section{The Alexander polynomial}\label{sec:4}
Let $K$ be a knot in $S^3$ and $X_\infty$ be the infinite cyclic cover of the knot complement. The homology group $H_1(X_\infty)$ is a module over $\mathbb{Z}[q,q^{-1}]$, which is called the Alexander module. The Alexander polynomial $\A^\K(q)$ is usually defined as a generator of a certain ideal in this module \cite{Prasolov}.
One may notice that the Alexander polynomial is a special value of the fundamental HOMFLY polynomial:
\begin{equation}
\A^\K(q)={\cal H}^\K(q,1).
\end{equation}
HOMFLY polynomials can be defined for higher representations of the gauge group, therefore we have the following generalization: 
\begin{Definition}
The Colored Alexander polynomial is defined as a special value of the HOMFLY polynomial:
\begin{equation}\label{aldef}
\A^\K_R(q)={\cal H}^\K_R(q,1) \quad \text{or} \quad\ \A^\K_R(e^\h)=\lim_{N \rightarrow 0} {\cal H}^\K_R(e^\h,e^{N\h}).
\end{equation}
\end{Definition}
Representations of $SU(N)$ are labelled by Young diagrams. We identify a representation $R$ with it's diagram $[R_1,R_2,\ldots]$.
\begin{Definition}
A single hook diagram is a diagram of the form: 
\begin{equation}
R=[r,\underbrace{1,\ldots,1}_L]
\end{equation}
\begin{gather*}
\yng(5,3,2,1) \ \ \text{[5,3,2,1] diagram,}\quad \
\yng(5,1,1,1) \ \  \text{Single hook diagram [5,1,1,1]}
\end{gather*}
\end{Definition}
\begin{Conjecture}
For any knot $\K$ and any single hook diagram $R=[r,1^L]$:
\begin{equation}\label{ac}
\A^\K_R(q)=\A^\K_{[1]}(q^{\vert R\vert}),
\end{equation}
where $|R|=r+L$ and $[1]$ is the fundamental representation.
\end{Conjecture}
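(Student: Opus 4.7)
The plan is to reduce the conjecture to an identity in the Reshetikhin--Turaev category and exploit a degeneration of single-hook representations in the $N\to 0$ limit. First, I would write $\mathcal{H}^\K_R(q,a)$ (normalised by the quantum dimension $\dim_q V_R$) as a Markov trace of a braid word acting on $V_R^{\otimes s}$, with $V_R$ realised as the image of a Young idempotent $Y_R$ in $V_{[1]}^{\otimes|R|}$ via Schur--Weyl duality for the Hecke algebra $H_{|R|s}(q)$. At $a=q^N=1$ both the trace and the quantum dimension vanish, and the hook-length factorisation
\begin{equation*}
\dim_q V_R=\prod_{(i,j)\in R}\frac{[N+j-i]}{[h(i,j)]}
\end{equation*}
shows that the vanishing is of order one for a single hook $R=[r,1^L]$, because only the single factor $[N]$ on the numerator side vanishes. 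Thus the $N\to 0$ limit is a controlled $0/0$.

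Second, I would identify the leading coefficient of this ratio with a rescaled fundamental invariant. The crucial structural input is that, under the $N\to 0$ degeneration, the single-hook Young idempotent projects onto a one-parameter family of typical modules over $U_q(\mathfrak{gl}(1|1))$, the Hopf superalgebra whose Reshetikhin--Turaev invariant is the Alexander polynomial. The free parameter of this family should be identified with $|R|$, entering the braiding as the spectral shift $q\mapsto q^{|R|}$. Combinatorially this is visible through the telescoping of content sums along the hook,
\begin{equation*}
\sum_{(i,j)\in R}(j-i)^k = \tfrac{1}{k+1}\bigl[(r-\tfrac12)^{k+1}-(-L-\tfrac12)^{k+1}\bigr] + (\text{lower-order terms}),
\end{equation*}
which collapses higher Casimir eigenvalues on a single hook into polynomials in the two variables $r$ and $L$ whose resummation reproduces the shift $q\mapsto q^{|R|}$ of the fundamental Burau-type action.

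As a complementary check I would verify the identity perturbatively via \eqref{Expansion}: expanding the right-hand side gives $\A^\K_{[1]}(e^{|R|\h})=\sum_n \h^n|R|^n\sum_j v^\K_{n,j}\,r^{[1]}_{n,j}|_{N=0}$, so order by order the conjecture becomes a linear relation between $r^R_{n,j}|_{N=0}$ and $|R|^n r^{[1]}_{n,j}|_{N=0}$ modulo the kernel of the Vassiliev pairing; this can be checked for small $n$ directly using the Casimir evaluations above. The principal obstacle lies in the second step: the Young idempotent is not naively well-defined at $a=1$, so the rank-one reduction onto an effective $\mathfrak{gl}(1|1)$-module with parameter $|R|$ must be performed inside a non-semisimple degenerated category. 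Establishing this rank-one collapse cleanly --- so that the surviving invariant is genuinely $\A^\K_{[1]}(q^{|R|})$ and not some rescaled cousin --- is where the delicate super-algebraic input enters the argument.
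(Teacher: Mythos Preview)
The paper does not prove this statement: it is explicitly labelled a \emph{Conjecture}, and the surrounding text only cites a proof for torus knots (Zhu), experimental verification for a few small representations, and remarks that it would follow from the eigenvalue conjecture with a sketch given in \cite{MM}. The remainder of the paper \emph{assumes} the identity and analyses its consequences for the group factors. There is therefore no proof in the paper against which your attempt can be compared; you are attempting something the authors do not claim to do.

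As for your proposal on its own merits: the route via a $U_q(\mathfrak{gl}(1|1))$ degeneration is a sensible heuristic, and your observation that the hook-length product for $\dim_q V_R$ has a simple zero at $N=0$ precisely when $R$ is a single hook is correct and is the right reason to expect a clean $0/0$ limit. But you yourself name the real gap in your final paragraph: the assertion that the single-hook Young idempotent degenerates to a rank-one typical $\mathfrak{gl}(1|1)$-module whose continuous parameter enters the braiding as $q\mapsto q^{|R|}$ is stated, not established. Making this precise requires working inside a non-semisimple specialisation where the idempotent is not a priori defined, and controlling what survives of the Markov trace there; none of that is done here. The perturbative consistency check you describe at the end is essentially what the paper's Section~5 carries out (your content-sum formula is a repackaging of their Casimir evaluation $C_k(R)=(r-\tfrac12)^k-(-L-\tfrac12)^k$ on single hooks), but as you note it only confirms compatibility order by order and cannot replace the missing categorical argument. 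Note also that the approach hinted at in the paper --- reduction to the eigenvalue conjecture for $R$-matrices --- is a quite different line from your $\mathfrak{gl}(1|1)$ degeneration, so even if you complete your argument it would constitute an independent proof rather than the one the authors have in mind.
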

This conjecture was proven for torus knots in \cite{Zhu}. It is also supported by explicit evaluations of HOMFLY polynomials colored by representations $[2,1]$ and $[3,1]$ for 3-strand knots \cite{MMMS21} and colored by representation $[2,1]$ for arborescent knots \cite{GuJ}. It is worth to note that this conjecture is a corollary of the eigenvalue conjecture, which drastically simplifies for Alexander polynomials colored by 1-hook diagrams. Based on this idea the sketch of the proof of \eqref{ac} is given in \cite{MM}.

Another important property is the symmetry of the HOMFLY (and, in particular, Alexander) polynomials with respect to the transposition of the Young diagram of the representation. This property holds for arbitrary diagrams $R$ and comes from the corresponding property of quantum groups
%\cite{Quant}
and WZW theories \cite{RankLevel1,RankLevel2}, in the latter case, it is called the rank-level duality (see also \cite{LiuPeng}):
\begin{equation}
{\cal H}_R^\K(q,a)={\cal H}_{R^T}^\K(q^{-1},a).
\end{equation}
This property is immediately inherited by the Alexander polynomials,
\begin{equation}\label{ranklevel}
\A^\K_R(q)=\A^\K_{R^T} (q^{-1}).
\end{equation}

\section{Alexander equations}\label{sec:5}
\subsection{The equations}
As we have mentioned the HOMFLY polynomial polynomial admits a series expansion in the formal variable $\h$:
\begin{equation}\label{Expansion1}
{\cal H}_R^\K=\sum_n \left(\sum_j v_{n,j}^\mathcal{K} r_{n,j}^R\right) \h^n.
\end{equation}
According to definition \eqref{aldef} the Alexander polynomial is a special value of the latter, therefore it inherits the structure of the expansion. Let us substitute \eqref{Expansion1} in  \eqref{a}, put $N=0$ and denote the resulting group factors as $A_{i,j}^R$
\begin{equation}
\begin{split}
\A^\K_R(q)-\A^\K_{[1]}(q^{\vert R\vert})=\sum_n \h^n &\sum_m {v}^\K_{n,m}
\left(r^R_{n,m}-|R|\cdot r^{[1]}_{n,m}\right)\Big|_{N=0}=:\\
&=: \sum_n \h^n \sum_m {v}^\K_{n,m} A^R_{n,m}\ \stackrel{^{R=[r,1^L]}}{=}\ 0.
\end{split}
\end{equation}

Since $\h$ is an arbitrary formal variable and this equality hold for all $\h$ we see that:
\begin{equation}\label{16}
\boxed{
A^{[r,1^L]}_{n,m}= 0.
}
\end{equation}
As described above, the group factors are expanded in the Lie algebra Casimir eigenvalues.
\begin{equation}\label{16p}
A^R_{n,m} = \sum_{|\Delta|\le n} \alpha_{\Delta,m} C_\Delta(R),
\end{equation}
where we label the monomials of $C_k$ by the Young diagrams in accordance with
\begin{equation}
C_\Delta=\prod_{i=1}^{l(\Delta)} C_{\Delta_i}.
\end{equation}
Then, $A^R_{n,m}$ can be considered as functions of Casimir invariants only, all the dependence on the representation entering through these latter ones:
\begin{equation}
A^R_{n,m}=A_{n,m}(C)
\end{equation}
Note that one can also reexpand the difference
\begin{equation}\label{17}
\begin{split}
\A^\K_R(q)-\A^\K_{[1]}(q^{\vert R\vert})&=\sum_n \h^n \sum_{|\Delta|\le n} C_\Delta(R) \sum_m {v}^\K_{n,m}  \alpha_{\Delta,m}= \\ =
&\sum_n \h^n \sum_{|\Delta|\le n} \alpha^\K_\Delta C_\Delta(R) \ , \text{where}\\
\alpha^\K_\Delta:=&\sum_m {v}^\K_{n,m}  \alpha_{\Delta,m}=(v_n^\K, \alpha_\Delta)\nonumber
\end{split}
\end{equation}
into the Casimir invariants instead of the group factors.

Equality \eqref{16} constitutes a property of the group factors of the Alexander polynomial. Let us look at generic polynomials of Casimir operators, that obey the desired property.
\begin{Definition}
The Alexander system of equations is a linear system of equations on the coefficients $\xi_{\Delta}^{(m)}$ given by 
\begin{equation}\label{gen}
\boxed{
X_{n,m} (C):= \sum_{|\Delta|=n} \xi_{\Delta}^{(m)} C_\Delta(R)=0}
\end{equation}
for any $n$ and any single hook diagram with $|R|=n$. 
\end{Definition}
It's obvious that the group-factors $A_{m,n}$ of the Alexander polynomial  are linear combination of the basis solutions to these equations:
\begin{equation}
A_{n,m}(C)\in \hbox{Span}\Big(\oplus_{k\le n}X_{k,m}(C)\Big)
\end{equation}
Therefore we consider (\ref{gen}) as equations defining the general structure of the polynomial. We denote by $\Al_n$ the vector space of solutions of the Alexander equations \eqref{gen} at order $n$, where by a solution we mean polynomials in Casimir invariants satisfying \eqref{gen}. We call a solution even, if it's an even polynomial in $C_k(R)$ and odd otherwise. The space of all solutions is the following graded vector space:
 \begin{equation}
Al=\bigoplus_n \Al_n
\end{equation}

The discussion about the trivalent graphs above allows to formulate the problem in other terms. We are studying Jacobi diagrams, which are weighted with an $\mathfrak{sl}_N$ weight system, with $N$ then set to zero. Hence the Alexander equations describe an ideal in the algebra of Jacobi diagrams with weights vanishing for single single hook representations of $\mathfrak{sl}_N$. We leave a more detailed analysis of this side of the problem for future studies.
\subsection{Properties of the Alexander equations}
Now let us explore some properties of the defined system \eqref{gen}.
\begin{Statement}
The Casimir invariants as functions of the Young diagram $R$ can be represented as the following shifted symmetric functions \cite{Zhe,OP,Casimir}:
\begin{equation}
\label{cas}
C_k(R)=\sum_{i=1}\Big[(R_i-i+1/2)^k-(-i+1/2)^k\Big].
\end{equation}
\end{Statement}
Restricted to 1-hook diagrams $R=[r,\underbrace{1,\dots,1}_L]$, this formula reduces to:
\begin{equation}\label{casrl}
C_k(R)=(r-1/2)^k-(-L-1/2)^k,
\end{equation}
or, in a more symmetric form, with $l=l(R)=L+1$ being the length of partition:
\begin{equation}\label{cas1}
C_k(R)=(r-1/2)^k+(-1)^{k+1}(l-1/2)^k.
\end{equation}
\begin{Corollary}
As a corollary of the explicit definition \eqref{cas1}, the symmetry with respect to transposition of the diagram reads
\begin{equation}\label{Transp}
C_\Delta(R^T)=(-1)^{|\Delta|+l(\Delta)}C_\Delta(R).\quad R=[r,1^L]
\end{equation}
\end{Corollary}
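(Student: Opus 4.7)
The plan is to reduce the statement on products $C_\Delta$ to the single-power case $C_k$, and then verify the single-power case by direct substitution into formula \eqref{cas1}. Since $C_\Delta = \prod_i C_{\Delta_i}$ by definition, once we establish
\begin{equation*}
C_k(R^T) = (-1)^{k+1} C_k(R), \qquad R=[r,1^L],
\end{equation*}
the multiplicative structure gives
\begin{equation*}
C_\Delta(R^T) = \prod_{i=1}^{l(\Delta)} (-1)^{\Delta_i+1} C_{\Delta_i}(R) = (-1)^{\sum_i(\Delta_i+1)} C_\Delta(R) = (-1)^{|\Delta|+l(\Delta)} C_\Delta(R),
\end{equation*}
which is exactly the claim.

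For the single-power identity, I would start from the observation that transposing a single hook interchanges the arm and the leg: if $R=[r,1^L]$ with $l=l(R)=L+1$, then $R^T = [l, 1^{r-1}]$, so the pair $(r, l)$ becomes $(l, r)$ under transposition. Plugging into \eqref{cas1} gives $C_k(R) = (r-1/2)^k + (-1)^{k+1}(l-1/2)^k$ and $C_k(R^T) = (l-1/2)^k + (-1)^{k+1}(r-1/2)^k$. A one-line computation shows that multiplying the first expression by $(-1)^{k+1}$ swaps the roles of the two summands (since $(-1)^{2(k+1)} = 1$), yielding exactly $C_k(R^T)$.

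There is no real obstacle here: the corollary is an immediate algebraic consequence of the explicit symmetric-function formula \eqref{cas1} combined with the definition $C_\Delta = \prod C_{\Delta_i}$. The only thing worth emphasizing in the write-up is that the sign $(-1)^{k+1}$ at the level of $C_k$ accumulates to $(-1)^{|\Delta|+l(\Delta)}$ at the level of $C_\Delta$, since each factor contributes one $(-1)^{\Delta_i}$ and one $(-1)$, and summing $\Delta_i$ over parts gives $|\Delta|$ while the number of $(-1)$'s equals $l(\Delta)$.
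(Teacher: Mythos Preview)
Your proof is correct and follows essentially the same approach as the paper: both observe that transposing a single-hook diagram swaps $r$ and $l$, yielding a factor $(-1)^{\Delta_i+1}$ for each factor $C_{\Delta_i}$, and then accumulate these signs to $(-1)^{|\Delta|+l(\Delta)}$. Your write-up is simply a more explicit version of the paper's one-line argument.
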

\begin{proof}
Simply notice, that the transposition of a single hook diagrams exchanges $r$ and $l$ introducing a factor of $(-1)^{\Delta_i+1} $ for the $i$'th multiplier.
\end{proof}
Generally one would think that solutions of the Alexander equations may contain even and odd polynomials. However, it appears that such solutions always factor into an even and odd part, which vanish separately.
\begin{Lemma}
The graded space $\Al$ of solutions of the Alexander equations decomposes into a direct sum of graded spaces of even and odd solutions:
\begin{equation}
\Al=\Al^e \oplus \Al^o
\end{equation} 
\end{Lemma}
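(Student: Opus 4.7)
The plan is to exploit the transposition symmetry from the preceding Corollary together with the observation that the set of single-hook diagrams is closed under Young-diagram transposition: indeed $[r,1^L]^T=[L+1,1^{r-1}]$ is again a single hook (and of the same total size). Consequently, if a polynomial $X\in\Al_n$ vanishes on every admissible single hook $R$, it also vanishes on every $R^T$, which furnishes an independent linear constraint derived from exactly the same data.

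Concretely, I first decompose an arbitrary graded solution $X_{n,m}=\sum_{|\Delta|=n}\xi_\Delta^{(m)}C_\Delta$ into its parts of definite parity in the Casimir generators, i.e.\ by the parity of the total Casimir-degree $l(\Delta)$:
\begin{equation}
X^{e}_{n,m}:=\!\!\sum_{\substack{|\Delta|=n\\ l(\Delta)\,\mathrm{even}}}\!\!\xi_\Delta^{(m)}C_\Delta,\qquad X^{o}_{n,m}:=\!\!\sum_{\substack{|\Delta|=n\\ l(\Delta)\,\mathrm{odd}}}\!\!\xi_\Delta^{(m)}C_\Delta.
\end{equation}
This exactly matches the even/odd terminology of the statement, since $C_\Delta=\prod_i C_{\Delta_i}$ has total $C$-degree $l(\Delta)$, so the involution $C_k\mapsto -C_k$ multiplies $C_\Delta$ by $(-1)^{l(\Delta)}$.

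Applying the Corollary at fixed $|\Delta|=n$ then gives
\begin{equation}
X_{n,m}\!\left(C(R^T)\right)=\sum_{|\Delta|=n}(-1)^{n+l(\Delta)}\,\xi_\Delta^{(m)}C_\Delta(R)=(-1)^{n}\bigl(X^{e}_{n,m}(R)-X^{o}_{n,m}(R)\bigr).
\end{equation}
Since $X_{n,m}(R)=X^{e}_{n,m}(R)+X^{o}_{n,m}(R)=0$ on every single hook $R$, and the transposed condition reads $X^{e}_{n,m}(R)-X^{o}_{n,m}(R)=0$ on the same set, the trivial linear combination of the two identities yields $X^{e}_{n,m}(R)=0$ and $X^{o}_{n,m}(R)=0$ separately on every single hook. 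Hence $X^{e}_{n,m},X^{o}_{n,m}\in\Al_n$, and assembling over $n$ produces the claimed direct-sum decomposition $\Al=\Al^{e}\oplus\Al^{o}$.

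No serious obstruction is anticipated: the argument is a formal $\mathbb{Z}_2$ averaging driven by the rank-level symmetry on single hooks, which is precisely the input already supplied by the Corollary. The only things worth double-checking are that the parity of $l(\Delta)$ genuinely defines the same decomposition as the "even/odd polynomial in $C_k$" notion of the statement, and that the splitting respects the grading by $n$ — both are immediate from $C_\Delta=\prod_i C_{\Delta_i}$ and from the fact that transposition preserves $|\Delta|$.
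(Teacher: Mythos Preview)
Your proposal is correct and follows essentially the same approach as the paper's own proof: both use the transposition Corollary $C_\Delta(R^T)=(-1)^{|\Delta|+l(\Delta)}C_\Delta(R)$ together with the closure of single-hook diagrams under transposition, then add and subtract the original and transposed vanishing conditions to isolate the even and odd parts. Your write-up is in fact slightly more explicit than the paper's in noting that $[r,1^L]^T$ is again a single hook, which is the reason the transposed condition holds on the same set.
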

\begin{proof}
Equation \eqref{gen} should hold for any 1-hook diagrams $R$, therefore it should be invariant under a transposition of the diagrams. Now suppose we have some solution $X_n(C)$ to \eqref{gen}, which means that for any 1-hook Young diagram $R$:
\begin{equation}
X_n=\sum_{|\Delta|=n} \beta_\Delta C_\Delta(R)=0
\end{equation}
where some $\beta_\Delta$ may vanish. Under transposition of $R$ each monomial behaves according to \eqref{Transp}. Since $|\Delta|$ is fixed after the transposition we get:
\begin{equation}
\sum_{|\Delta|=n} \beta_\Delta C_\Delta(R^T) = \sum_{|\Delta|=n} (-1)^{l(\Delta)}\beta_\Delta C_\Delta(R)=0
\end{equation}
Since $l(\Delta)$ is either odd or even, denote the even one's as $\Delta_1$ and the odd ones as $\Delta_2$, and split the sum:
\begin{gather*}
0 = \sum_{|\Delta_1|=n} \beta_{\Delta_1} C_{\Delta_1}(R) + \sum_{|\Delta_2|=n} \beta_{\Delta_2} C_{\Delta_2}(R) \\ = 
\sum_{|\Delta_1|=n} \beta_{\Delta_1} C_{\Delta_1}(R) - \sum_{|\Delta_2|=n} \beta_{\Delta_2} C_{\Delta_2}(R) \Rightarrow \\
\Rightarrow 
X_n^o=\sum_{|\Delta_2|=n} \beta_{\Delta_2} C_{\Delta_2}(R) =0 \quad , \quad X_n^e=\sum_{|\Delta_1|=n} \beta_{\Delta_1} C_{\Delta_1}(R) =0 \\   
\end{gather*}
This means that any solution $X_n \in \Al$ splits into the sum of $X_n^e \in \Al^e$ and $X_n^o \in \Al^o$ which vanish separately, therefore are solutions \eqref{gen}.
\end{proof}
Some examples of the solutions to the system are:
\begin{align}\label{alexsol}
&\h^4:X^e_{4,1}(C)=C_1^4-4C_1 C_3+3 C_2^2;  \nonumber\\[4pt]
&\h^5: X^e_{5,1}(C)=C_2 C_1^3-3 C_4 C_1+2 C_2 C_3, \nonumber\\
 &\h^5:X^o_{5,1}(C)=C_1 \left( C_1^{4}-4\,C_1C_3+3\,C_2^{2} \right) ;\nonumber\\[4pt]
 &\h^6: X^e_{6,1}(C)=4 C_1^2 \left(C_1^4-4 C_3 C_1+3 C_2^2\right),   \\
& \phantom{\h^6: } \ X^e_{6,2}(C)=2 C_3 C_1^3-3 C_2^2 C_1^2-8 C_3^2+9 C_2 C_4,\nonumber\\
    &\phantom{\h^6: } \  X^e_{6,3}(C)=C_3 C_1^3+3 C_2^2 C_1^2-9 C_5 C_1+5 C_3^2; 
 \nonumber \\
    &\phantom{\h^6: } \  X^o_{6,1}(C)=C_1^{2}C_4-2\,C_1C_2C_3+C_2^{3}
  \nonumber  \\
    &\phantom{\h^6: } \   X^o_{6,2}(C)=C_2\left( C_1^{4}-4\,C_1C_3+3\,C_2^{2} \right)\nonumber
\end{align}
\begin{Theorem}
The dimensions of the homogeneous components of $Al$ are given by: 
\begin{equation}\label{Dim}
\dim{\mathcal{A}l^e_n}=p_e(n)-\left\lfloor\dfrac{n}{2}\right\rfloor, \quad l(\Delta)\in 2\mathbb{Z}
\end{equation}
\begin{equation}
\dim{\mathcal{A}l^o_n}=p_o(n)-\left\lfloor\dfrac{n+1}{2}\right\rfloor, \quad l(\Delta)\in 2\mathbb{Z}+1,
\end{equation}
where $p_{e}(n)$ and $p_o(n)$ is the number of partitions of $n$ into even and odd number of integers respectively.
\end{Theorem}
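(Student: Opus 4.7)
The plan is to parametrize the Casimir invariants on single-hook representations by two free variables and reduce the Alexander equations to an elementary dimension count in the polynomial ring $\mathbb{C}[u,v]$. On $R=[r,1^L]$ with $l:=L+1$, set $u:=r-\tfrac12$ and $v:=-(l-\tfrac12)$; formula \eqref{cas1} then becomes
\begin{equation*}
C_k(R) = u^k - v^k = (u-v)\, h_{k-1}(u,v),
\end{equation*}
where $h_{k-1}$ is the complete homogeneous symmetric polynomial, so
\begin{equation*}
C_\Delta(R) = (u-v)^{l(\Delta)}\,\prod_{i=1}^{l(\Delta)} h_{\Delta_i-1}(u,v).
\end{equation*}
Hence $\Al_n$ is the kernel of the evaluation map $\phi_n\colon\Span\{C_\Delta:|\Delta|=n\}\to\mathbb{C}[u,v]$ into homogeneous polynomials of degree $n$, and by the preceding Lemma $\phi_n=\phi_n^e\oplus\phi_n^o$ splits according to the parity of $l(\Delta)$. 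The task reduces to computing $\dim\mathrm{Im}(\phi_n^e)$ and $\dim\mathrm{Im}(\phi_n^o)$ and subtracting them from $p_e(n)$ and $p_o(n)$.

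For $\phi_n^e$, each $C_\Delta$ in the domain is symmetric in $u,v$ and divisible by $(u-v)^2$, so $\mathrm{Im}(\phi_n^e)\subseteq (u-v)^2\,\Lambda_{n-2}$, where $\Lambda_d$ denotes the degree-$d$ part of the ring of symmetric polynomials in $u,v$ (with $\dim\Lambda_d=\lfloor d/2\rfloor+1$). To prove equality, I restrict to length-two partitions $(a,b)$ with $a+b=n,\ a\ge b\ge 1$, which contribute $(u-v)^2\,h_{a-1}h_{b-1}$. The remaining point is that $\{h_{a'}h_{b'}:a'+b'=n-2,\ a'\ge b'\ge 0\}$ is a basis of $\Lambda_{n-2}$; this follows from the two-variable Jacobi--Trudi identity $s_{(a,b)}=h_a h_b - h_{a+1}h_{b-1}$, which makes the transition matrix from the Schur basis $\{s_{(a,b)}\}$ of $\Lambda_{n-2}$ to these products lower unitriangular with unit diagonal, hence invertible. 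Therefore $\dim\mathrm{Im}(\phi_n^e)=\lfloor n/2\rfloor$ and $\dim\Al_n^e=p_e(n)-\lfloor n/2\rfloor$.

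For $\phi_n^o$, each $C_\Delta$ is antisymmetric, giving $\mathrm{Im}(\phi_n^o)\subseteq(u-v)\,\Lambda_{n-1}$. The length-one partition contributes $C_{(n)}=(u-v)\,h_{n-1}$, while length-three partitions contribute $(u-v)^3\,h_{a_1-1}h_{a_2-1}h_{a_3-1}$. After factoring out $(u-v)$, the former yields $h_{n-1}\in\Lambda_{n-1}$, and the latter sweeps out $(u-v)^2\,\Lambda_{n-3}\subseteq\Lambda_{n-1}$ by the same Jacobi--Trudi argument, applied after setting $h_{a_3-1}=h_0=1$ to reduce to length-two products. Since $h_{n-1}(u,u)=n\,u^{n-1}\ne 0$, the element $h_{n-1}$ is not divisible by $(u-v)$, so $h_{n-1}\notin(u-v)^2\,\Lambda_{n-3}$ and the sum is direct. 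Consequently $\dim\mathrm{Im}(\phi_n^o)=1+\dim\Lambda_{n-3}=\lfloor(n+1)/2\rfloor$, yielding $\dim\Al_n^o=p_o(n)-\lfloor(n+1)/2\rfloor$ and completing the theorem.

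The only substantive ingredient is the spanning claim for two-variable products of complete symmetric polynomials, resolved cleanly by Jacobi--Trudi. Everything else is the factorization $C_k=(u-v)\,h_{k-1}$ coming from \eqref{cas1} together with elementary dimension bookkeeping in $\mathbb{C}[u,v]$.
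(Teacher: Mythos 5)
Your proof is correct and takes essentially the same route as the paper: parametrize the single-hook Casimirs by two variables, observe that even (resp.\ odd) monomials are symmetric and divisible by $(u-v)^2$ (resp.\ antisymmetric and divisible by $(u-v)$), and count the image dimension inside the two-variable symmetric polynomials of degree $n-2$ (resp.\ $n-1$). The only difference is that you explicitly justify the key spanning step --- that the products $h_{a}h_{b}$ form a basis --- via the two-row Jacobi--Trudi identity, and organize the odd count as $1+\dim\Lambda_{n-3}=\lfloor (n+1)/2\rfloor$, whereas the paper simply asserts these points; the resulting dimensions agree.
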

\begin{proof}
Denote $x=r-1/2, y=-L-1/2$, then the Casimir invariants \eqref{cas1} take the form:
\begin{equation}\label{casxy}
C_k(R)=x^k-y^k
\end{equation}
Equation \eqref{gen} should hold for all $r,L$, therefore it should also hold for $x,y$. Therefore it is equivalent to the vanishing of the coefficients of powers of $x,y$. It may happen that not all of these coefficients are independent. Therefore our goal is to find the number of independent coefficients, which will give us the number of equations on the initial variables $\xi_\Delta$.
\\
First, let's treat the case of the even solutions. The number of variables in the case is just $p_e(n)$, the number of partitions of even length. Therefore monomials $C_\Delta$ always contain at least two Casimir operators. The product of two Casimir operators expands as:
\begin{equation}
C_{k_1}C_{k_2}=(x^{k_1}-y^{k_1})(x^{k_2}-y^{k_2})
\end{equation}
This means that the whole expression is divisible by $(x-y)^2$, therefore we are left with a polynomial in $x,y$ of order $n-2$. Now, notice that even though $C_k(R)$ is antisymmetric in $x,y$ the even solution is symmetric.\\\\
Hence, we have:
\begin{equation}
X_{n,m}(C)=(x-y)^2 \hat{X}_{n,m}
\end{equation}
$\hat{X}_{n,m}$ is also symmetric, therefore it belongs to the ring symmetric polynomials of order $n-2$ in two variables $\Lambda(x,y)$. Moreover, it lies in the homogeneous component $\Lambda_{n-2}(x,y)$. To find the independent equations on $\xi_\Delta$ we must a basis in this ring. One basis in the ring of symmetric functions is given by the complete symmetric polynomials:
\begin{equation}
h_k(x,y)=\sum_{\substack{l_1+l_2=k \\l_i\geq 0}} x^{l_1} y^{l_2}.
\end{equation}
The basis in $\Lambda_{n-2}(x,y)$ is given by the polynomials:
\begin{equation}
h_\lambda(x,y)=h_{\lambda_1}h_{\lambda_2}, \quad \lambda_1\geq\lambda_2\geq 0, \ \lambda_1+\lambda_2=n-2
\end{equation}
Now, the coefficients of each $h_\lambda(x,y)$ give rise to independent equations on $\xi_\Delta$.   
Hence, the dimension of the space in question is 
\begin{equation}
\dim{\Al_n}=p_e-\dim{\Lambda_{n-2}(x,y)},
\end{equation}
with
\begin{equation}
\dim{\Lambda_{n-2}(x,y)}=\left\lfloor\dfrac{n}{2}\right\rfloor
\end{equation}
 \\\\
The case of odd solutions is treated in the same manner. This time the shortest monomial is of length 1, therefore the equation is divisible by $(x-y)$. Dividing an expression, antisymmetric in $x,y$, by another antisymmetric expression we get once again a symmetric polynomial in $x,y$ this time of degree $n-1$. Carrying out a similar analysis once again, we obtain for the dimension of the space of odd solutions:
\begin{equation}
\dim{\Al_n}=p_o-\dim{\Lambda_{n-1}(x,y)},
\end{equation}
with
\begin{equation}
\dim{\Lambda_{n-1}(x,y)}=\left\lfloor\dfrac{n+1}{2}\right\rfloor
\end{equation} 
\end{proof}
\begin{Remark}
The generating function for $p_e(n)$ and $p_o(n)$ are: \begin{gather}
\dfrac{1+\vartheta_4(0,q)}{2 (q;q)_{\infty}}=\sum_{n=0}^\infty p_e(n) q^n \\
\dfrac{1-\vartheta_4(0,q)}{2 (q;q)_{\infty}}=\sum_{n=0}^\infty p_o(n) q^n \nonumber 
\end{gather}
\end{Remark}
\begin{Lemma}\label{3.2}
The group factors in the expansion of the Alexander equation lie in a certain subspace of $\Al$:
\begin{equation}\label{321}
A_{2 n,m}(C)\in \bigoplus_{k=2}^{n} \left( \Al^e_{2k}  \oplus \Al_{2k-1}^o \right) \subset 
\end{equation}
\begin{equation}\label{322}
A_{2 n+1,m}(C)\in \bigoplus_{k=2}^{n+1} \left( \Al^o_{2k}  \oplus \Al_{2k-1}^e \right) \subset \Al 
\end{equation}
In other words in every order $n$ of $\h$, the terms of degree $k= n \ {\rm mod} \ 2$  consist of even monomials $C_\Delta$, and the terms of degree $k= n+1 \ {\rm mod} \ 2$  consist of odd monomials.
\end{Lemma}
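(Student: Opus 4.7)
The plan is to derive this parity structure as a direct consequence of rank-level duality \eqref{ranklevel} combined with the transposition rule for Casimirs.

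First, I would extend the Corollary from single-hook $R$ to arbitrary Young diagrams: the shifted power-sum formula \eqref{cas} immediately gives $C_k(R^T)=(-1)^{k+1}C_k(R)$ for every $R$, so that by multiplicativity
\begin{equation*}
    C_\Delta(R^T)=(-1)^{|\Delta|+l(\Delta)}\,C_\Delta(R).
\end{equation*}
This is the only Casimir-side input required.

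Next, I would apply rank-level duality to the generating identity
\begin{equation*}
    \A^\K_R(q)-\A^\K_{[1]}(q^{|R|})=\sum_n \hbar^n \sum_m v^\K_{n,m}\,A^R_{n,m}.
\end{equation*}
Because $|R^T|=|R|$ and because \eqref{ranklevel} specialised to $R=[1]$ already gives $\A^\K_{[1]}(q)=\A^\K_{[1]}(q^{-1})$, the left-hand side is invariant under the joint substitution $R\to R^T$, $\hbar\to -\hbar$. Writing $A^R_{n,m}=\sum_\Delta \gamma_{n,m,\Delta}\,C_\Delta(R)$ (cf.\ \eqref{16p}) and comparing coefficients of $\hbar^n C_\Delta(R)$ on the two sides of this invariance yields, for every knot $\K$,
\begin{equation*}
    \sum_m v^\K_{n,m}\,\gamma_{n,m,\Delta}=0\quad\text{whenever}\quad n+|\Delta|+l(\Delta)\ \text{is odd}.
\end{equation*}

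Finally, I would invoke the standard independence of the Vassiliev invariants $v^\K_{n,m}$ at each fixed order $n$ as $\K$ varies. This upgrades the weighted-sum vanishing to the pointwise statement $\gamma_{n,m,\Delta}=0$ for every $m$ and every $\Delta$ with $n+|\Delta|+l(\Delta)$ odd; equivalently, each $A^R_{n,m}$ is supported on Casimir monomials with $|\Delta|+l(\Delta)\equiv n\pmod 2$. Splitting this condition by the parity of $|\Delta|$ reproduces precisely \eqref{321}--\eqref{322}: the surviving summands are $\Al^e_{2k}\oplus \Al^o_{2k-1}$ when $n$ is even and $\Al^o_{2k}\oplus\Al^e_{2k-1}$ when $n$ is odd, with the lower cut-off $k\ge 2$ inherited from the triviality of $\Al^{e,o}_n$ for $n\le 3$ (immediate from \eqref{gen} in those orders) and from the subtraction of $|R|\,r^{[1]}_{n,m}|_{N=0}$, which removes any $C_1$-linear contribution from $A^R_{n,m}$. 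The step I expect to be the main obstacle is exactly this last one: passing from the vanishing of the Vassiliev-weighted sum to the vanishing of each individual $\gamma_{n,m,\Delta}$, since a priori the invariants $v^\K_{n,m}$ could satisfy knot-independent linear relations that would obstruct the conclusion. A cleaner route avoiding this step would be to establish rank-level duality at the level of the chord-diagram weight systems, yielding $r^{R^T}_{n,m}|_{N=0}=(-1)^n\,r^R_{n,m}|_{N=0}$ separately for every $m$, from which the Casimir-parity comparison applies to each $A^R_{n,m}$ individually.
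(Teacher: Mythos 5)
Your proposal is correct and reaches the lemma, but it differs from the paper at the decisive step. The paper's proof is three lines long: it asserts that rank-level duality descends to each expansion component separately, $A_{n,m}(C(R^T))=(-1)^n A_{n,m}(C(R))$, and compares this with the sign $(-1)^{|\Delta|+l(\Delta)}$ of each monomial to force $(-1)^n=(-1)^{|\Delta|+l(\Delta)}$ --- i.e.\ the paper simply takes for granted exactly what you label the ``cleaner route'' (duality at the level of individual group factors, equivalently of the $sl_N$ weight systems). Your main argument instead stays with the honest polynomial-level identity \eqref{ranklevel}, extracts the weighted relation $\sum_m v^\K_{n,m}\gamma_{n,m,\Delta}=0$ whenever $n+|\Delta|+l(\Delta)$ is odd, and then separates components using linear independence of the order-$n$ Vassiliev invariants as $\K$ varies; this does go through provided the $r_{n,j}$ correspond to a basis of chord diagrams modulo the 4T relations, since the coefficients $v_{n,j}$ of the Kontsevich integral then have linearly independent symbols (the dual basis) and hence are linearly independent as knot invariants --- so the step you flag as the main obstacle is surmountable, but it imports a nontrivial external fact from Vassiliev theory that the paper never invokes. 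Two smaller points in your favour: you correctly observe that \eqref{Transp} extends verbatim to arbitrary $R$, which the paper states only for single hooks even though its own proof needs the general case (on single-hook diagrams the $C_\Delta$ are not linearly independent --- that is precisely what the Alexander equations express, so coefficients cannot be compared there), and you account for the cutoff $k\ge 2$ via the vanishing of the low-order components of $\Al$, which the paper's proof passes over. In short: both arguments rest on the same two inputs (rank-level duality and the Casimir transposition sign), but you buy rigor at the component-separation step at the price of an external independence result, whereas the paper gets brevity by asserting component-wise duality outright.
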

\begin{proof}
Let us apply the rank-level duality to the expansion components. Since the expansion corresponds to letting $q=e^{\h}$, for the group factor the property reads: 
\begin{equation}
A_{n,m}(C(R^T))=(-1)^n A_{n,m}(C(R)).
\end{equation}
At the same due to \eqref{Transp}, we acquire a factor of $(-1)^{|\Delta|+l(\Delta)}$ for each summand of order $|\Delta|=k$. For the two properties to be consistent we should have:
\begin{equation}
(-1)^n=(-1)^{k+l(\Delta)}.
\end{equation}
\end{proof}
Let us state another property, which motivates the later discussion:
\begin{Lemma}\label{oddeven}
The odd solutions can be generated by multiplying the polynomials from $\Al^e$ by suitable odd monomials $C_\Delta$ with $l(\Delta)$ - odd. 
\end{Lemma}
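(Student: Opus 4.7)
The plan is to prove the sharper statement $\Al^o_n=\sum_{k\geq 1}C_k\cdot\Al^e_{n-k}$; the full lemma follows because for any $\Delta$ of odd length $l(\Delta)\geq 3$ the product $C_{[\Delta_2,\ldots,\Delta_{l(\Delta)}]}\cdot X^e$ has even length and vanishes on single hooks, hence already lies in $\Al^e$, so only the $l(\Delta)=1$ case is essential.

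First I will decompose an arbitrary $X^o\in\Al^o_n$ as $\sum_{k=1}^{n} C_k Y_k$ with each $Y_k$ a polynomial in Casimirs of degree $n-k$ and even length; this is possible because every odd-length Casimir monomial factors by extracting a single $C_k$. Using the parameterization $C_k=x^k-y^k$ from \eqref{casxy}, the image of each $Y_k$ has the form $(x-y)^2 g_k$ for some symmetric polynomial $g_k\in\Lambda_{n-k-2}(x,y)$ (this is the same computation underlying the dimension formula \eqref{Dim}), while $Y_n\in\mathbb{Q}$ is the scalar coefficient of $C_n$. Factoring $(x-y)$ out of $X^o(x,y)$, the condition that $X^o$ vanish on single hooks becomes
\begin{equation*}
Y_n\,h_{n-1}(x,y)+(x-y)^2\sum_{k=1}^{n-2}h_{k-1}(x,y)\,g_k=0 \qquad \text{in }\Lambda(x,y).
\end{equation*}
Specializing at $x=y$ annihilates the second summand while $h_{n-1}(x,x)=n\,x^{n-1}$, forcing $Y_n=0$; cancelling $(x-y)^2$ from what remains leaves the key constraint $\sum_{k=1}^{n-2}h_{k-1}g_k=0$ in $\Lambda_{n-3}(x,y)$.

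The crux is to exploit the non-uniqueness of the decomposition $X^o=\sum C_k Y_k$ to arrange that every $Y_k$ lies in $\Al^e$. I will use the commutator identity $C_a(C_b W)-C_b(C_a W)=0$, valid for any odd-length $W$, which permits the replacement $(Y_a,Y_b)\mapsto(Y_a+C_b W,\,Y_b-C_a W)$ without altering $X^o$. Under the substitution this realizes the move $(g_a,g_b)\mapsto(g_a+h_{b-1}w,\,g_b-h_{a-1}w)$, where $w\in\Lambda_{n-a-b-1}(x,y)$ is arbitrary thanks to the surjectivity of $C_k\mapsto x^k-y^k$ from odd-length polynomials of degree $m$ onto $(x-y)\Lambda_{m-1}(x,y)$ (again implicit in the dimension theorem). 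Taking $a=1$, so that $h_0=1$, and successively setting $w=g_b$ for $b=2,3,\ldots,n-2$ kills every $g_b$ with $b\geq 2$; the correction accumulated in $g_1$ equals $\sum_{b\geq 2}h_{b-1}g_b$, which is $-g_1$ by the key constraint, so $g_1$ also vanishes. Hence all $Y_k\in\Al^e_{n-k}$ and the sharper statement is established.

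The main obstacle is this last step, which is really the statement that a Koszul-type complex over $\Lambda(x,y)$ built from the sequence $(h_0,h_1,\ldots,h_{n-3})$ is exact in the appropriate degree. This works because the sequence begins with the unit $h_0=1$, or equivalently because the linear Casimir $C_1$ is always present and plays the role of a pivot through which any obstruction is funneled and cleared. Without such a generator the Koszul moves would only shuffle information among the $g_k$ and a full cancellation would not be possible.
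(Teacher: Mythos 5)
Your argument is correct, and it takes a genuinely different route from the paper's. The paper defers this lemma until after the main theorem and proves it as Corollary \ref{oddeven2}: there the even solutions are already organized into an explicit basis labelled by even-length Young diagrams with more than one hook, each containing a unique leading monomial $C_T$; for an odd-length diagram $T\in\Y_o$ one cuts off the bottom row, multiplies the even solution labelled by the truncated diagram by the single Casimir $C_{T_{l(T)}}$, and concludes by counting: these products are independent (distinct leading monomials) and their number matches the dimension formula \eqref{Dim}, hence they span $\Al^o_n$. You instead argue directly with an arbitrary odd solution: decompose it as $\sum_k C_k Y_k$ with even-length $Y_k$, pass to the variables of \eqref{casxy}, kill the $C_n$-coefficient by specializing $x=y$, and then use the exchange moves $(Y_1,Y_b)\mapsto(Y_1+C_bW,\,Y_b-C_1W)$ to clear each $g_b$, the constraint $\sum_k h_{k-1}g_k=0$ finally killing the accumulated $g_1$. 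I checked the bookkeeping: the moves preserve $X^o$, evenness and homogeneity of the $Y_k$, and each step touches only $g_1$ and the single $g_b$ being cleared, so the cancellation closes as you claim. The one imported ingredient is the surjectivity of the odd-length substitution map onto $(x-y)\Lambda_{m-1}(x,y)$ in each lower degree $m$; this is indeed equivalent to the odd case of the dimension theorem (rank--nullity plus the obvious inclusion of the image in $(x-y)\Lambda_{m-1}$), so the citation is legitimate, though it deserves that one explicit sentence rather than being called implicit. As for what each approach buys: yours is independent of the main theorem (it could be placed immediately after the dimension count) and shows for an arbitrary odd solution the sharper decomposition $\Al^o_n=\sum_k C_k\cdot\Al^e_{n-k}$; the paper's version, while logically later, simultaneously produces the explicit basis of $\Al^o_n$ labelled by $\Y_o$, which is what Corollary \ref{generatingcorollary} uses. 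Note that the paper's construction also multiplies by a single Casimir, so the sharper single-factor form is common to both; and your closing remark about Koszul exactness is dispensable, since the elementary computation you already performed is the whole proof.
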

We will return to the proof of this lemma after proving the main theorem in the next sections. However it's useful to hold in mind, since it explains our concentration on the even solutions $Al_e$ in the forthcoming  section. For example:
\begin{eqnarray}
X^o_{5,1}(C) &=& C_1 X^e_{4,1}(C) \nonumber \\
X^o_{6,1}(C) &=& \frac{1}{3}C_2 X^e_{4,1}(C) - \frac{1}{3}C_1 X^e_{5,1}(C)  \\
X^o_{6,2}(C) &=& C_2 X^e_{4,1}(C) \nonumber
\end{eqnarray}
\subsection{Group factors of the Alexander polynomial}
Let us illustrate how these lemmas work for the group factors of the Alexander polynomial:
\begin{align*}
\A^\K_R(q)-\A^\K_{[1]}(q^{\vert R\vert})&=\hbar^4 {v}^\K_{4,1} A_{4,1}(C)+\\&+\hbar^6\Big({v}^\K_{6,1} A_{6,1}(C)+{v}^\K_{6,2} A_{6,2}(C)\Big)+\\&
+\hbar^7{v}^\K_{7,1} A_{7,1}(C)+\\&
+\hbar^8\Big(A_{8,1}(C) v^\K_{8,1}+A_{8,2}(C) v^\K_{8,2}+A_{8,3}(C) v^\K_{8,3}+A_{8,4}(C)v^\K_{8,4} \Big)+\\&+ O(\hbar^9)
\end{align*}
where $A_{n,m}$ are certain combinations of the basis solutions, that we have listed above.
\begin{align}
&A_{4,1}(C)=X^e_{4,1}(C) \nonumber\\
&A_{5,m}(C)=0\nonumber\\
&A_{6,1}(C)= X^e_{4,1}(C), \ \
A_{6,3}(C)=-X^e_{6,1}(C) - \frac{5}{3}X^e_{6,2}(C) - \frac{2}{3}X^e_{6,3}(C),\ \\
&A_{7,1}(C)=X^o_{6,1}(C) \nonumber\\
&A_{8,1}(C)=X^e_{4,1}(C), \ \   A_{8,2}(C)=A_{6,3}(C) \ \nonumber \\  
&A_{8,3}(C)=C_1^4X^e_{4,1}(C)+7 X^e_{8,6}(C)-7 X^e_{8,7}(C)+X^e_{8,8}(C) ,   \ A_{8,4}(C)= C_1^2 X_{4,1}^e(C)\nonumber
\end{align}
with the solutions of the Alexander equations at order 8 being:
\begin{equation}
\begin{split}
&X^e_{8,1}=-C_2^2 \left(C_1^4-4 C_3 C_1+3 C_2^2\right), \\
&
X^e_{8,2}=-\left(C_1^4-4 C_3 C_1+3 C_2^2\right) \left(3 C_2^2+4 C_1 C_3\right),  \\
&X^e_{8,3}=-C_2 \left(C_2 C_1^4-2 C_4
   C_1^2+C_2^3\right),\\
&
X^e_{8,4}=-4 C_3 C_1^5-7 C_2^2 C_1^4+16 C_5 C_1^3-5 C_2^4, \\
&X^e_{8,5}=5 C_1^4 \left(C_1^4-4 C_3 C_1+3 C_2^2\right), \\
&
X^e_{8,6}=\left(4 C_3 C_1^5-5 C_2^2 C_1^4+5 C_2^4+60 C_4^2-64 C_3 C_5\right),  \\
&X^e_{8,7}=\left(C_2^4-4 C_6 C_2+3 C_4^2\right), \\
&
X^e_{8,8}=4 C_3 C_1^5+11 C_2^2 C_1^4-64 C_7 C_1+21 C_2^4+28 C_4^2
\end{split}
\end{equation}
We notice that there are less independent group factors at each order, than there are solutions of the Alexander equations. Obviously, we have a restriction given by lemma \ref{3.2}. However, there are certainly more: for example the components $\Al^o_{2k-1}$ and $\Al^e_{2k-1}$ from \eqref{321},\eqref{322} are absent. Moreover, even from what is left not all of the solutions enter independently. 
This hints that there are more symmetries of the Alexander polynomial, that would exactly fix its group factors.

\subsection{Supersymmetric polynomials}
The formula \eqref{casxy} can be extended to generic Young diagrams by parametrize them in terms of hooks, i.e. by the diagonal boxes. In the same manner one can introduce the variables $x_i=r_i-1/2,\ y_i=-L_i-1/2$ for the $i$-th hook. Then the Casimir invariants read:
\begin{equation}
C_k(x_1,\ldots,y_1,\ldots)=\sum_{i=1}^n (x_i^k -y_i^k)
\end{equation}
This expression is symmetric in the shifted variables $x_i,y_i$ separately. Formally this means that the Casimir invariants are supersymmetric in the variables $x,y$.
\begin{Definition} A polynomial $P(x,y)$ is supersymmetric if it is invariant under separate permutations of $x$'s and $y$'s and for $x_i=y_i=z$ doesn't depend on $z$.
\end{Definition}
As in the usual case one can define the ring of supersymmetric polynomials and look for various basis in it. One of those is given by the supersymmetric generalization of power sums \cite{SS1}:
\begin{equation}
\mathfrak{P}(x_1,\ldots,y_1,\ldots)=\sum_{i=1}^n x_i^k - \sum_{j=1}^m y_j^k
\end{equation}
Therefore the Casimir invariants are special cases of these supersymmetric power sums. In these terms we can give another formulation of our problem:
$\Al_n$ is an ideal in the ring of superymmetric functions that vanishes when all but one pair of variables are set equal. The connection of  supersymmetric polynomials and the KP hierearchy was studied in \cite{SS2}.
\begin{Remark} The supersymmetric polynomials are usually discussed in the context of Lie superalgebras, namely the space generated by Casimir invariant of $gl_{n|m}$ isomorphic to the space of supersymmetric $(n,m)$ polynomials\cite{SS4}. 
\end{Remark}
\section{The KP hierarchy}\label{sec:6}
Now we need a few standard results about the KP hierarchy \cite{MiwaJimboDate,Kharchev,TG}.
\begin{Definition}
The operator
\begin{equation}
D^n_x: (f,g) \rightarrow (D^n_x f\cdot g)
\end{equation}
defined by \begin{equation}
f(x+y)g(x-y)=\sum_{n=0}^{\infty}\dfrac{1}{n!}(D^n_x f\cdot g)y^n
\end{equation}
is called a \emph{Hirota derivative}.
\end{Definition}
The multivariate Hirota derivative is defined in a similar fashion via the multivariate Taylor expansion:
\begin{equation}
f(\textbf{x}+\textbf{y})g(\textbf{x}-\textbf{y})=\sum_{i_1,\ldots, i_m}\left(D^{l_1}_1\ldots D^{l_m}_m f\cdot g\right) y_1^{l_1}y_2^{l_2}\ldots y_m^{l_m}
\end{equation}

Here we notice an important point. If $g=f$, then odd powers of Hirota derivatives act trivially for any $f$. 
\\The KP hierarchy is defined by the following bilinear identity:
\begin{equation}\label{bi}
\oint \dfrac{d \zeta}{2\pi i} e^{ 2 \sum _i \zeta^i z_i } e^{\sum _i -\frac{1}{i \zeta^i}D_{T_i}}e^{\sum _j z_j D_{T_j}} \tau \cdot \tau  = 0,
\end{equation}
where  $D_{T_i}$ are the Hirota derivatives, $\tilde D \equiv  (D_{T_{^1}}$, ${1\over 2} D_{T_{^2}},\ldots$), and $z=\{ z_1,z_2,\dots \}$ are formal parameters.
Consequently, it can be shown that  the KP hierarchy is defined by the following generating function:
\begin{equation}\label{hir}
\mathrm{KP}(z_i) \tau \cdot \tau=
\sum ^\infty _{i=0} \chi_i(-2z)\chi_{i+1}(\tilde D_T) e^{[\sum _jt_jD_{T_j}]} \tau \cdot \tau  = 0,
\end{equation}
where  $\chi_i$ are the Schur functions in symmetric representations. The KP equations in the Hirota form appear as coefficients of  (\ref{hir}) of shur function $\chi(z_i)$ in a polynomial form $P_n(D_1,D_2,\ldots,D_{n-1})$. The first few equations are: \\
$\bullet \quad$ The equation of order 4, which is the KP equation itself:
\begin{flushleft}
\begin{equation}
(4D_1 D_3 - 3D_2^2-D_1^4) \tau \cdot \tau = 0
\end{equation}
\end{flushleft}
$\bullet \quad$ Higher KP equations:
\begin{align}\label{Hirota5}
P_5 \  & =3D_1 D_4 - 2D_2 D_3 - D_1^3 D_2 \nonumber; \\[4pt]
P_{6,1}& =D_1^2 \left(D_1^4-4 D_3 D_1+3
   D_2^2\right),\nonumber \\
P_{6,2} & =D_1^6-20 D_3 D_1^3-45 D_2^2
   D_1^2+144 D_5 D_1-80 D_3^2,\nonumber \\
P_{6,3} & =D_1^6+10 D_3 D_1^3-36 D_5 D_1-20 D_3^2+45 D_2
   D_4;   \\[4pt]
\end{align}

The equations of the hierarchy follow from expanding the polynomial. It appears that some of these equations are not independent. Moreover some of the differ only by terms odd in powers of Hirota operators. Therefore we propose to look at a certain set of equations which do not contain operators that act trivially and also generate the KP hierarchy.
Denote the ring of even polynomials in variables $x_1,x_2,\ldots $ with integer coefficients by $P^e(x)$. It's naturally graded by the power of the polynomial $P^e(x)=\bigoplus_n P^e_n(x)$. 
\\\\

We want to establish a connection of the group factors with soliton $\tau$-functions. Let us remind how the soliton $\tau$-functions of the KP hierarchy are constructed. One considers a Hirota equation: 
\begin{equation}
P(D_1,D_2,\ldots)\tau \cdot \tau=0
\end{equation}
Look for a solution as a formal series:
\begin{equation}\label{solit}
\tau=1+\epsilon f_1 + \epsilon^2 f_2 +\ldots
\end{equation}
In the first order one gets the following dispersion relation for $f_1=e^{\sum_j k_j T_j}$:
\begin{equation}\label{64}
\dfrac{1}{2}(P(k_1,k_2,\ldots)+P(-k_1,-k_2,\ldots))=0
\end{equation}
Where the symmetrization appears as a consequence of the trivial action of odd Hirota derivatives. The action of the KP hierarchy on the soliton $\tau$-function produces a set of dispersion relations. As these are just polynomials in the momentum variables we are free to generate an ideal in the ring of even polynomial by these dispersion relations. Denote the ring of even polynomials in variables $x_1,x_2,\ldots $ with integer coefficients by $P^e(x)$. It's naturally graded by the power of the polynomial $P^e(x)=\bigoplus_n P^e_n(x)$. 
\begin{Definition}
Define the Hirota ideal $\widehat{\KP} \subset P^e(k)$ as the ring of polynomial generated by the dispersion relations of the KP hierarchy
\begin{equation}
\widehat{\KP} = \Span \langle \ 
\text{Dispersion relations of the soliton $\tau$-function} \ 
 \rangle
\end{equation}
Different order KP equations produce different order dispersion relation, i.e. the Hirota ideal is  naturally graded: $\widehat{\KP}=\bigoplus_n \widehat{\KP}_n$
As it was mentioned, \eqref{bi} generates all Hirota equations, hence it also generates the dispersion relations when $\tau$ is specified to be of the form \eqref{solit}. Therefore, the generating function of the dispersion relations looks like:
\begin{equation}
\oint \dfrac{d \zeta}{2\pi i} e^{ 2 \sum _i \zeta^i z_i } e^{\sum _i -\frac{1}{i \zeta^i}k_i}e^{\sum _i z_i k_i}  +
\oint \dfrac{d \zeta}{2\pi i} e^{ 2 \sum _i \zeta^i z_i } e^{ \sum _i \frac{1}{i \zeta^i}k_i}e^{- \sum _i z_i k_i} .
\end{equation}
While having a generating function, there is no convenient way to enumerate all of the dispersion relations in such way that their linear independence would be evident. However we can find a subset, which has this property.
\end{Definition}
For $Y=[i,j], |Y|=i+j, i\geq j\geq 2$ define the polynomials:
\begin{equation}\label{kpgen}
h_Y(k)  \ =  \ \begin{vmatrix}
\chi_{i} (-\tilde{k}/2) & \chi_{i} (\tilde{k}/2) & \chi_{i+1} (\tilde{k}/2) \\
\chi_{j-1} (-\tilde{k}/2) & \chi_{j-1} (\tilde{k}/2) & \chi_{j} (\tilde{k}/2) \\
1 & 1 & \dfrac{1}{2}k_1
\end{vmatrix} 
+ 
\left(k \rightarrow -k \right).
\end{equation}
where $\tilde{k}=\left\{\dfrac{k_i}{i}\right\}$
\begin{Proposition}
The polynomials \eqref{kpgen} are some of the dispersion relations of the KP hierarchy and are linearly independent.
\end{Proposition}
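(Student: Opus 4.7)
The plan is to prove the two parts of the proposition in sequence: (i) each $h_Y(k)$ with $Y=[i,j]$, $i\ge j\ge 2$, lies in $\widehat{\KP}$, and (ii) the family $\{h_Y\}_{Y\in\mathcal Y_2}$ is linearly independent in $P^e(k)$. The starting point is the expansion of the $3\times 3$ determinant \eqref{kpgen} along its third row $(1,1,k_1/2)$ followed by the $k\to -k$ symmetrisation that defines $h_Y$. Noting that the top-right $2\times 2$ minor equals, by Jacobi--Trudy, the Schur polynomial $s_{[i,j]}(\tilde k/2)$ attached to $Y$, this rewrites $h_Y$ as an even combination of $s_{[i,j]}(\pm\tilde k/2)$ together with two ``mixed-sign'' $2\times 2$ minors.

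For part (i) I would match $h_Y$ against the generating function of dispersion relations obtained from the bilinear identity \eqref{bi} applied to a one-soliton $\tau$. Evaluating the $\zeta$-residue produces $G(z,k)=\sum_n\chi_n(2z)\bigl[\chi_{n+1}(-\tilde k)e^{\sum z_ik_i}+\chi_{n+1}(\tilde k)e^{-\sum z_ik_i}\bigr]$, whose $z^\alpha$-coefficients are exactly the dispersion relations. The $3\times 3$ determinantal form of $h_Y$ reflects three ``Plücker indices'' $\{i,\,j-1,\,0\}$: this is a Fay-type three-point identity on the one-soliton $\tau$ written via the Miwa shift $T_j\mapsto T_j\pm\tfrac{1}{j}\mu^{-j}$, under which $e^{\pm\sum k_j\mu^{-j}/j}=\sum_n\chi_n(\pm\tilde k)\mu^{-n}$ produces exactly the Schur entries of the determinant. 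I would first verify this in the base case $Y=[2,2]$: direct expansion gives $h_{[2,2]}=(k_1^4-4k_1k_3+3k_2^2)/12=-P_4(k)/12$ with $P_4$ the KP dispersion relation of weight $4$, and then extend to general $Y$ by the same Miwa/Schur bookkeeping.

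For part (ii) I exploit a leading-monomial argument. Since $h_{[i,j]}$ is weighted-homogeneous of weight $i+j$, linear independence may be checked inside each graded piece. A direct inspection of the Schur contributions shows that the only two-factor monomials $k_ak_b$ with $a+b=|Y|$ appearing in $h_{[i,j]}$ are $k_ik_j$ (the ``characteristic'', arising from $s_{[i,j]}$ and the $\chi_i\chi_j$ parts of the mixed minors) and $k_{i+1}k_{j-1}$ (the ``subdominant'', from the $\chi_{i+1}\chi_{j-1}$ parts); using that the single-variable term $k_n$ occurs inside $\chi_n(\tilde k/2)$ with coefficient $1/(2n)$, a short calculation yields the non-zero coefficients $1/(ij)$ and $-1/((i+1)(j-1))$ respectively. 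Ordering the two-row diagrams of weight $N$ by dominance, so that $Y_a$'s subdominant monomial coincides with $Y_{a-1}$'s characteristic, the matrix whose $(r,a)$-entry records the coefficient of the monomial $m_r$ in $h_{Y_a}$ is bidiagonal with non-zero entries on its diagonal and subdiagonal. Its bottom row has a single non-zero entry, so any vanishing linear combination $\sum c_a h_{Y_a}=0$ forces $c_p=0$, then $c_{p-1}=0$, and so on upward, establishing linear independence.

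The principal obstacle is part (i): exhibiting the precise Fay/Plücker three-term identity that yields the specific $3\times 3$ determinant $h_Y$ (rather than some other natural combination of bilinear-identity coefficients) requires careful combinatorial bookkeeping of Schur products under the Miwa shift. A less transparent but purely computational alternative would be to expand both sides in a common basis of $k$-monomials and verify the identity order by order; this bypasses the conceptual Plücker/Grassmannian picture but provides a direct route if the Fay interpretation proves cumbersome.
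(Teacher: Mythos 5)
There is a genuine gap in the first half of your argument. The crux of part (i) is the claim that the specific $3\times 3$ Schur--determinant combinations \eqref{kpgen} are (symmetrized) coefficients of the KP generating function, i.e.\ that the corresponding Hirota equations \eqref{n1} really belong to the hierarchy. The paper does not derive this from the bilinear identity at all: it simply invokes the known result of Dubrovin--Natanzon that the determinant equations \eqref{n1} are a subset of the hierarchy, and then passes to dispersion relations by the symmetrization step \eqref{64}. You instead propose to reconstruct this fact via a Fay-type three-point identity and Miwa shifts applied to the one-soliton bilinear identity \eqref{bi} -- a reasonable strategy, and your residue computation of the generating function is conceptually right -- but you explicitly stop short of exhibiting the identity: you verify only the base case $Y=[2,2]$ and then appeal to ``the same Miwa/Schur bookkeeping'' for general $Y$. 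That bookkeeping is precisely the content of the statement, so as written the membership $h_Y\in\widehat{\KP}$ is asserted, not proven. The fallback you offer (expanding both sides in $k$-monomials and checking ``order by order'') cannot close the gap either, since the family $\{h_Y\}$ is infinite and each $Y$ is a separate finite check; without a uniform combinatorial identity (or a citation to the Dubrovin--Natanzon result) this is a verification scheme, not a proof.

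Part (ii) is fine and in fact more explicit than the paper's one-line justification. Your computation that the only quadratic monomials in $h_{[i,j]}$ are $k_ik_j$ with coefficient $1/(ij)$ and $k_{i+1}k_{j-1}$ with coefficient $-1/((i+1)(j-1))$ reproduces the expansion given in the paper, and your dominance-ordered bidiagonal matrix argument (the monomial $k_{|Y|-1}k_1$ occurs in exactly one $h_Y$, which starts the downward induction) is a correct and welcome elaboration of the paper's remark that the polynomials are ``uniquely defined by their second order terms''. So the independence half stands; the membership half needs either the missing Fay/Pl\"ucker derivation carried out in detail or an appeal to the literature, as the paper does.
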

\begin{proof}
By \cite{Natanzon} the equations:
\begin{equation}\label{n1}
\begin{vmatrix}
\chi_{i} (-\tilde{D}/2) & \chi_{i} (\tilde{D}/2) & \chi_{i+1} (\tilde{D}/2) \\
\chi_{j-1} (-\tilde{D}/2) & \chi_{j-1} (\tilde{D}/2) & \chi_{j} (\tilde{D}/2) \\
1 & 1 & \dfrac{1}{2}D_1
\end{vmatrix} \tau \cdot \tau =0 
\end{equation}
are a subset of the equations of the hierachy. Hence as in \eqref{64} the polynomials \eqref{kpgen} form dispersion relations. To see their linear independence one expands the first terms of Schur polynomials and computes the determinant. It appears that:
\begin{equation}
h_Y(k)=\dfrac{k_i k_j}{ij}-\dfrac{k_{i+1}k_{j-1}}{(i+1)(j-1)}+
\sum_{q=2}^{\infty}
 \sum_{r_1+\ldots+r_{2q}=|Y|}
 P^{ij}_{r_1 \ldots r_{2q}} k_{r_1} \ldots k_{r_{2q}}
\end{equation}
These polynomials are uniquely defined by their second order terms and are linearly independent.
\end{proof}
Let us denote the ideal generated by \eqref{kpgen} by $\KP=\bigoplus_n\KP_n$. Then we have the obvious inclusion:
\begin{equation}
\KP \subset \widehat{\KP}
\end{equation}
\section{KP hierarchy and solutions to Alexander equation.}\label{sec:7}
This section is be devoted to the announced proof of our statements given in section 3, in particular the refinement of the result of \cite{MMMMS} . It consists of two parts, which are formulated as separate lemmas: the first basically repeats a simple calculation in \cite{MMMMS} now interpreted correctly in terms of dispersion relations. The second lemma is a combinatoric construction, which completes the proof of the KP-Alexander relation and also provides the independent polynomials \eqref{generators}. There we also explain how the linear basis is given by diagrams $\mathcal{Y}_e$ and explain how $X_\lambda ,\, \lambda \in \mathcal{Y}_e $ is multiplicatively generated by $X_[y_1,y_2]$.
\begin{Theorem}
The Hirota ideal and the even part of the ring of solutions of the Alexander equation are in fact the same graded ideal, expressed in different variables.
\end{Theorem}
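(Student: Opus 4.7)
The plan is to exhibit both ideals as the same object by giving each a common geometric characterization: they coincide with the (even part of the) vanishing ideal of the rational curve $\mathcal{C}=\{v_i = s^i - t^i : (s,t)\in \mathbb{A}^2\}$ inside the polynomial ring $\mathbb{Q}[v_1,v_2,\ldots]$, where on one side $v_i$ is identified with the Casimir variable $C_i$ and on the other with the soliton momentum $k_i$. The restriction to even/symmetric polynomials arises naturally from both sides: Casimir monomials $C_\Delta$ with $l(\Delta)$ even become symmetric in $(x,y)$ after substitution (as already exploited in Theorem \ref{Dim}), while the KP dispersion relation picks out the even part $(P(k)+P(-k))/2$ because odd Hirota derivatives act trivially on $\tau\cdot\tau$.

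First I would complete the Alexander side, which is nearly immediate from earlier material. Formula (\ref{cas1}) shows that on single-hook diagrams $C_k(R)=x^k-y^k$ with $x=r-1/2$, $y=-L-1/2$, and $(r,L)$ ranges over a Zariski-dense subset of $\mathbb{A}^2$; hence an element of the abstract polynomial ring in Casimir variables lies in $\Al$ precisely when its pullback under $c_i\mapsto x^i-y^i$ vanishes identically. The parity splitting from Lemma \ref{oddeven} and Corollary 5.2 then identifies $\Al^e$ with the kernel of this substitution restricted to monomials of even length, i.e.\ with the symmetric-in-$(x,y)$ part of the vanishing ideal of $\mathcal{C}$.

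For the KP side, consider the single-soliton $\tau$-function $\tau=1+e^{\eta}$ with $\eta=\sum_j k_j T_j$ where $k_j=s^j-t^j$ for spectral parameters $s,t$. A direct computation from the definition of Hirota derivatives gives $P(D_T)\tau\cdot\tau=e^{\eta}(P(k)+P(-k))$ for every KP polynomial $P$ with $P(0)=0$, so the dispersion relation associated with $P$ is literally its even part evaluated at $k$. Since $\tau$ is a genuine KP solution for all values of the spectral parameters, every element of $\widehat{\KP}$ vanishes on $\mathcal{C}$, giving one inclusion. For the opposite inclusion I would invoke the explicit family $h_Y$ of (\ref{kpgen}) for $Y=[y_1,y_2]\in\mathcal{Y}_2$: their leading terms $\tfrac{k_{y_1}k_{y_2}}{y_1y_2}-\tfrac{k_{y_1+1}k_{y_2-1}}{(y_1+1)(y_2-1)}$ are manifestly linearly independent, and multiplicative products indexed by Young diagrams with more than one hook produce enough elements of $\widehat{\KP}$ to match the dimension $p_e(n)-\lfloor n/2\rfloor$ computed from the ambient vanishing ideal (by the same symmetric-function/divisibility-by-$(s-t)^2$ argument used in the proof of Theorem \ref{Dim}).

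The main obstacle is this last step: controlling the full span of the Hirota ideal and matching it against the ambient kernel dimension. This reduces to showing that the $h_Y$ together with their polynomial products indexed by $\mathcal{Y}$ form a basis of the kernel of $k_i\mapsto s^i-t^i$, which is precisely where the combinatorial description via Young diagrams with more than one hook advertised in Section \ref{sec:3} enters. Once both sides are identified with the same graded ideal $\ker\phi\subset P^e(v)$, the identification $k_i\leftrightarrow C_i$ (with $(s,t)\leftrightarrow(x,y)$) finishes the proof, and as a byproduct the basis elements $X_\lambda$ of $\Al^e$ are produced from multiplicative combinations of the two-row generators $X_{[y_1,y_2]}$, yielding (\ref{generators}).
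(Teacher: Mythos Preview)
Your proposal is correct and follows essentially the same two-step architecture as the paper: first the inclusion $\widehat{\KP}\subset\Al^e$ (the paper proves this via a residue computation in the bilinear identity, you via the equivalent direct Hirota calculation $P(D)\tau\cdot\tau=e^\eta(P(k)+P(-k))$ on $\tau=1+e^\eta$), then equality by matching dimensions through the combinatorics of the $h_Y$ and their products indexed by $\mathcal{Y}_e$, which is exactly the content of the paper's Lemma~\ref{main2} and its inner Proposition. Your geometric packaging as the vanishing ideal of the curve $v_i=s^i-t^i$ is a clean reformulation, but the substantive work---the inductive linear-independence argument you flag as the ``main obstacle''---is identical to what the paper carries out.
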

\begin{center}
\begin{Remark}
The single hook property of the Alexander polynomial is equivalent to the system of dispersion equations for the KP one soliton $\tau$-function.
\end{Remark}
\begin{Remark}
As mentioned in lemma \ref{oddeven} the odd part of the ring of solutions of the Alexander equations is generated by the even part (see Corollary \ref{oddeven2}) 
\end{Remark}

\end{center}
\begin{proof}
\begin{Lemma}\label{main1}
Any generator of $\widehat{\KP}$ satisfies the of the Alexander equation. In other words the Hirota ideal is an ideal in the ring of even solutions of the Alexander system:
\begin{equation}
\fullKP \subset \Al^e
\end{equation}
\end{Lemma}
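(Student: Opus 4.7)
The plan is to exploit the explicit form of the one-soliton $\tau$-function of the KP hierarchy and match it with the shifted-variable expression \eqref{cas1} for the Casimir eigenvalues on a single hook. Recall that a one-soliton solution of every Hirota equation of the KP hierarchy has the form
\begin{equation*}
\tau_{\text{sol}}(T) \;=\; 1 \,+\, \alpha\,e^{\sum_i (p^i - q^i)\,T_i},
\end{equation*}
for arbitrary parameters $p,q$ and normalization $\alpha$. This is the standard fact about KP solitons; it is equivalent to saying that the momenta of a one-soliton solution live on the rational Grassmannian point generated by two free parameters $p,q$.

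First I would substitute $\tau_{\text{sol}}$ into the bilinear generating identity \eqref{bi} and extract the first-order term in $\alpha$. Using the trivial action of odd Hirota derivatives on $1\cdot 1$ together with the symmetrization in \eqref{64}, every Hirota equation $P(D_1,D_2,\ldots)\tau\cdot\tau=0$ produces a dispersion relation of the form
\begin{equation*}
\tfrac12\bigl(P(k_1,k_2,\ldots)+P(-k_1,-k_2,\ldots)\bigr) \;=\; 0, \qquad k_i = p^i-q^i,
\end{equation*}
which must hold identically in $p,q$ because $\tau_{\text{sol}}$ is a bona fide KP solution for every choice of these parameters. In particular, each generator of the Hirota ideal $\widehat{\KP}$ is a polynomial in the $k_i$ that vanishes whenever $k_i$ has the form $p^i-q^i$.

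Next I would invoke the single-hook Casimir formula \eqref{casxy}, $C_k(R)=x^k-y^k$ with $x=r-1/2$ and $y=-L-1/2$. The identification $k_i\leftrightarrow C_i(R)$ then turns the condition ``$k_i=p^i-q^i$ for some $p,q$'' into the condition ``$R$ is a single hook''. Hence every generator of $\widehat{\KP}$, read as a polynomial in the Casimir invariants, vanishes on all single-hook diagrams $[r,1^L]$, which is precisely the Alexander equation \eqref{gen}. The evenness is automatic: the symmetrization $\tfrac12(P(k)+P(-k))$ forces each generator of $\widehat{\KP}$ to lie in $P^e(k)$, and the corresponding monomials $C_\Delta$ with $l(\Delta)$ even are exactly the even solutions, so $\widehat{\KP}\subset \Al^e$.

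The only point that needs a bit of care, and that I would treat as the main (minor) obstacle, is keeping track of the passage from the bilinear generating function to the individual dispersion relations: the expansion parameters $z_i$ of \eqref{hir} produce, order by order in $z$, the polynomial Hirota equations, and I would want to confirm that the spanning set of dispersion relations so obtained coincides with the generators appearing in the definition of $\widehat{\KP}$. Once this bookkeeping is in place, the argument reduces to the tautology ``one-soliton $\tau$-functions solve KP'' combined with the algebraic identity $C_k(R)=x^k-y^k$ for single hooks, which finishes the lemma and hence the inclusion $\widehat{\KP}\subset\Al^e$.
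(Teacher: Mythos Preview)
Your argument is correct and logically equivalent to the paper's, but the presentation differs in an instructive way. You take the existence of the one-soliton solution $\tau_{\text{sol}}=1+\alpha\,e^{\sum_i(p^i-q^i)T_i}$ as a known black-box fact about KP, and then observe that the single-hook formula $C_k(R)=x^k-y^k$ is exactly of the form $p^k-q^k$; the vanishing of all dispersion relations on single hooks is then immediate. The paper instead verifies this directly: it substitutes $D_i\mapsto C_i$ in the (symmetrized) bilinear identity \eqref{bi}, uses $e^{-\sum_i \zeta^{-i}C_i/i}=(\zeta-x)/(\zeta-y)$ on a single hook, and evaluates the contour integral by residues at $\zeta=x$ and $\zeta=y$, obtaining two terms that cancel. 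The paper even remarks afterwards that this computation ``resembles the general proof of the bilinear identity'' and that conceptually $k_j=C_j(R)$ on a single hook solves the dispersion relation --- which is precisely your route.

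What each buys: your version is cleaner and makes the soliton interpretation manifest from the outset, at the cost of importing the one-soliton fact; the paper's residue computation is self-contained and simultaneously re-derives the soliton dispersion relation rather than quoting it. The ``bookkeeping'' worry you flag is a non-issue: by definition $\widehat{\KP}$ is spanned by the symmetrized dispersion polynomials arising from the $z$-expansion of \eqref{hir}, so once the generating identity vanishes on single hooks, every generator does.
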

\begin{innerproof}[Proof of Lemma \ref{main1}]
The Hirota equations are fully specified the bilinear identity (recall, that it gives rise to the generating function). To prove that every polynomial in the Hirota ideal is an even solution of the Alexander equations substitute $D_k$ with $C_k(R)$ in the symmetrized bilinear identity \eqref{bi}. The resulting integral:
\begin{equation}\label{a}
\oint \dfrac{d \zeta}{2\pi i} e^{ 2 \sum _i \zeta^i z_i } e^{\sum _i -\frac{1}{i \zeta^i}C_i}e^{\sum _i z_i C_i}  +
\oint \dfrac{d \zeta}{2\pi i} e^{ 2 \sum _i \zeta^i z_i } e^{ \sum _i \frac{1}{i \zeta^i}C_i}e^{- \sum _i z_i C_i} .
\end{equation}
vanishes when $R$ is a 1-hook diagram. Indeed, the value of the Casimir invariant on 1-hook diagrams is given by \eqref{casxy}. Simplify one of the exponential in the integrand:
\begin{equation}
e^{-\sum _i \frac{1}{i \zeta^i}C_i}=e^{-\sum _i \frac{1}{i \zeta^i}(x)^i}e^{\sum _i \frac{1}{i \eta^i}(y)^i}=\dfrac{\zeta-x}{\zeta-y},
\end{equation}
while the same exponential in the second integral is the reciprocal of this expression.\\
This allows us to compute the integrals taking the residues at $\zeta_1=x, \ \zeta_2=y$. Therefore, restricted to 1-hook diagrams, \eqref{a} vanishes:
\begin{equation}
(y-x)e^{-\sum z_i(x^i+y^i)}+(x-y)e^{-\sum z_i (x^i+y^i)}=0.
\end{equation}
We see that any polynomial arising from the bilinear identity vanishes, when Hirota derivatves are replaced with 1-hook values of Casimir invariants \eqref{cas}. This proves the lemma.\\\\\\
One may notice that this calculation resembles the general proof of the bilinear identity \cite{MiwaJimboDate}. What basically happens here is that $k_j=C_j(R)$ for any single hook diagram $R$ solves the dispersion relation. Therefore the values of the Casimir operators on single hook diagrams are the momentum variables in the soliton $\tau$-function of the KP hierarchy. To obtain an $n$-soliton $\tau$-function we need to have $n$ different solutions to the dispersion relation, which in the terms of Casimir invariants would just mean taking $n$ different hook diagrams:
\begin{equation}\label{tau}
\tau(t_1,t_2,\ldots)=\sum_{H'\subset H_n} c_{H'} \exp\left(\sum_{R\in H'}\xi(R)\right)
\end{equation}
where $H$ is a set of exactly $n$ single hook diagrams, $c_{H'}$ is a constant, the sum runs over all subsets of $H$ and
\begin{equation}
\xi(R)=\sum_{k=1}^{\infty}C_k(R)t_k
\end{equation}
\begin{Remark}
In these terms the reduction to the KdV $\tau$-function look quite natural, mainly it means taking hook diagrams that are symmetric with respect to transposition.
\end{Remark}
\end{innerproof}
Now since we have that every polynomial (dispersion relation) from the Hirota ideal satisfies the Alexander system of equations we will prove by comparing dimensions that they are equal. Lemma \eqref{main1} obiously guarantess, that each for homogenous components $\fullKP_n \subset \Al^e $.
\begin{Lemma}\label{main2}
The vectors space dimension of the homogeneous components of $\KP$ are given by \eqref{Dim}.
\end{Lemma}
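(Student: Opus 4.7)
The plan is to show $\dim \KP_n \geq p_e(n) - \floor{n/2}$; combined with Lemma \ref{main1} (which, via the obvious inclusion $\KP \subset \fullKP$, implies $\KP_n \subset \Al^e_n$) and the previously established formula $\dim \Al^e_n = p_e(n) - \floor{n/2}$, this forces equality throughout and yields the claim.

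Since $\KP$ is the ideal in $P^e(k)$ generated by the polynomials $h_Y$ for $Y = [i,j]$ with $i \geq j \geq 2$, I would parametrize a spanning subfamily of $\KP_n$ by simple combinatorics. For each partition $\mu$ of $n$ with $l(\mu)$ even and $\mu_2 \geq 2$, set $Y_\mu = [\mu_1,\mu_2]$, $\Delta_\mu = [\mu_3,\ldots,\mu_{l(\mu)}]$, and
\[
g_\mu := k_{\Delta_\mu} \cdot h_{Y_\mu} \in \KP_n.
\]
The even-length partitions of $n$ excluded from this indexing are precisely the single-hook diagrams $[m,1^L]$ with $L$ odd; enumerating $L \in \{1,3,\ldots\}$ gives exactly $\floor{n/2}$ of them. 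Hence $G_n := \{\mu : l(\mu)\text{ even},\, \mu_2 \geq 2\}$ has cardinality $p_e(n) - \floor{n/2}$, matching the target dimension exactly.

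Linear independence of $\{g_\mu : \mu \in G_n\}$ would follow from a leading-monomial argument. Using
\[
h_Y = \frac{k_i k_j}{ij} - \frac{k_{i+1}k_{j-1}}{(i+1)(j-1)} + (\text{monomials of length} \geq 4),
\]
one finds
\[
g_\mu = \frac{k_\mu}{\mu_1\mu_2} - \frac{k_{\mu''}}{(\mu_1+1)(\mu_2-1)} + R_\mu,
\]
where $\mu''$ is the decreasing rearrangement of $[\mu_1+1,\mu_2-1,\mu_3,\ldots]$ and $R_\mu$ collects monomials of length strictly greater than $l(\mu)$. Order partitions first by length (ascending) and then lex (ascending) on parts. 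Since $\mu''_1 = \mu_1 + 1 > \mu_1$, inside each length class $k_\mu$ strictly precedes $k_{\mu''}$, and $k_\mu$ certainly precedes every monomial of larger length; so $k_\mu$ is the order-minimal monomial in the support of $g_\mu$, and distinct indices $\mu \in G_n$ give distinct leading monomials.

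To close, suppose $\sum_\mu c_\mu g_\mu = 0$ and let $\mu^*$ be the order-minimal index with $c_{\mu^*} \neq 0$. A monomial $k_{\mu^*}$ can appear in some $g_\nu$ only if $l(\nu) \leq l(\mu^*)$: if $l(\nu) < l(\mu^*)$, then $\nu$ is already order-smaller, forcing $c_\nu = 0$ by minimality; if $l(\nu) = l(\mu^*)$, then the length-$l(\mu^*)$ support of $g_\nu$ is $\{k_\nu, k_{\nu''}\}$, so $k_{\mu^*}$ in it forces either $\nu = \mu^*$ or $\nu_1 = \mu^*_1 - 1$, and in the latter case $\nu$ is lex-smaller and so again $c_\nu = 0$. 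The coefficient of $k_{\mu^*}$ in the sum therefore reduces to $c_{\mu^*}/(\mu^*_1\mu^*_2) \neq 0$, a contradiction. The main technical point is precisely this cancellation analysis, and in particular confirming that the higher-length tails $R_\nu$ never contribute to the coefficient of $k_{\mu^*}$ in the regime where $c_\nu$ can be non-zero; this is automatic because $R_\nu$ lives in lengths $> l(\nu)$, which exceeds $l(\mu^*)$ whenever $l(\nu) \geq l(\mu^*)$.
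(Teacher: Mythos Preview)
Your proposal is correct. The construction of the spanning family is identical to the paper's: the paper indexes by $\mathcal{Y}_e$ (even-length diagrams with more than one hook, which is exactly your condition $\mu_2\geq 2$), splits each $T=[Y,Y']$ into its first two rows and the remainder, and sets $\hat h_T=h_Y\,k_{Y'}$, which is precisely your $g_\mu$. The linear-independence arguments differ in presentation. The paper performs an inductive triangular transformation, working upward in $|Y'|$ and, at each stage, subtracting earlier $\hat h$'s so as to bring every polynomial to the normal form $[T]-[\text{fixed standard monomial}]$ within its length class (e.g.\ $[T]-[n-3,1,1,1]$ when $l(T)=4$); independence is then read off from the distinct first terms. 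You instead impose a total monomial order (length ascending, then lex by largest part) and observe directly that $k_\mu$ is the unique order-minimal monomial in $g_\mu$; the careful case split on $l(\nu)$ versus $l(\mu^*)$ correctly rules out interference from the tails $R_\nu$ and from the second length-$l$ term $k_{\nu''}$. Both are triangularity arguments at heart, but yours is shorter and avoids the case analysis on $l(T)$ and the induction on $k_0$; the paper's explicit normal form, on the other hand, makes the common ``second term'' visible, which feeds more transparently into the subsequent treatment of the odd solutions in Corollary~\ref{oddeven2}.
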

Hence by this lemma we will have the following chain of inclusions:
\begin{center}
\begin{tikzcd}[row sep=tiny]
& \fullKP \arrow[dd,hook'] \\
\KP \arrow[ur,hook] \arrow[dr,equal] & \\
& \Al^e
\end{tikzcd}
\end{center}
We see that all of the inclusions are in fact equalities. 
\begin{innerproof}[Proof of Lemma \ref{main2}]
First, let us interpret the quantities \eqref{Dim}. Let us call the Young diagrams having $n$ diagonal boxes $n$ hook diagrams in concordance with one hook diagrams. Then \eqref{Dim} is the number of Young diagrams with more than 1 hook and of even length. We will denote this set of diagrams as $\mathcal{Y}_e$  \\
Now, as one can notice the polynomials, that generate the Hirota ideal \eqref{kpgen}, are naturally numbered by 2-hook diagrams with two rows $\mathcal{Y}_2$, which is nice since it's a subset of $\mathcal{Y}_e$.
\\\\
\begin{tikzpicture}[>=triangle 45,font=\sffamily]
    \node (H) at (0,0) {Solutions of Alexander equation};
    \node (Y) [right=1cm of H]  {\begin{minipage}{5cm}Diagrams with more than 1 hook:\begin{equation*}    
    \ytableausetup{smalltableaux}\ytableausetup{aligntableaux = center} \ydiagram{5,4,3,3,2}\end{equation*}\end{minipage}};% 1.5cm below, 0.7cm to the left 
    \node (Z) [below=2cm of Y] {\begin{minipage}{5cm}Young diagrams with 2 rows and 2 hooks: \begin{equation*}\ytableausetup{smalltableaux}\ytableausetup{aligntableaux = center} \ydiagram{5,2}\end{equation*}\end{minipage}};
    \node (X) [left=3cm of Z] {$\KP_n$};
   
    \draw [semithick,->] (H) -- (Y) ;
    \draw [semithick,<->] (Y) -- (Z) node [midway,right] {$\mathcal{Y}_2 \subset \mathcal{Y}_e$};
    \draw [semithick,<->] (Z) -- (X);

   ;
\end{tikzpicture}

First of all let us perform a triangular transformation in $\KP$ , to get for $Y=[y_1,y_2]$:
\begin{gather}
h_Y(k)=\dfrac{1}{y_1 y_2}k_{y_1} k_{y_2} - \dfrac{1}{|Y|-1} k_{|Y|-1}k_1+ \ldots \  
\end{gather}
We abuse that notation and use $h$ for these new polynomials. Now the first term reflects the Young diagram associated to the polynomial, while the second only depends on the order. From now on we are not interested in carrying the coefficients since they exactly correspond to the monomial. We will symbolically denote the polynomials as follows:
\begin{equation}
h_Y(k)=[y_1,y_2]-\left[|Y|-1,1\right]
\end{equation} 
Any element from $\Al^e_n$ is represented by a Young diagram. However, contrary to the $\KP_n$ case, there is no direct correspondence yet, i.e. we can't tell the structure of a polynomial, which is represented by a diagram $T$. Take some element from $\Al^e_n$ and the corresponding diagram $T$ and cut it into two halves: the first two rows and everything else.
\begin{equation}
T=[Y,Y'],\quad T\in   \Y_e, \ Y\in\Y_2
\end{equation}
\begin{center}
\begin{tikzpicture}[>=triangle 45,font=\sffamily]
    \node (1) at (0,0) {$\ytableausetup{smalltableaux}
\ytableausetup{aligntableaux = center}
T \ = \ \ydiagram{5,4,3,2}$};
    \node (2) [above right=0cm and 0.9cm of 1]  {$\ytableausetup{smalltableaux}
\ytableausetup{aligntableaux = center}\ydiagram{5,4} \ = \ Y$};% 1.5cm below, 0.7cm to the left 
    \node (3) [below=0.7cm of 2] {$\ytableausetup{smalltableaux}
\ytableausetup{aligntableaux = center}\ydiagram{3,3,2} \ = \ Y'$};
    
    \draw [semithick,->] (1) -- (2); 
    \draw [semithick,->] (1) -- (3);
\end{tikzpicture}
\end{center}

Since $T$ is at least a 2-hook diagram $Y$ is also a 2-hook diagram. Therefore to $Y$ we can assign a polynomial $h_Y(k)$ and to $Y'$ a monomial $k_{Y'}$. Vice versa, each $T$ can be constructed by putting together two partitions $Y$ and $Y'$ subject to necessary conditions. That way we can assign a polynomial to the resulting diagrams $T$:
\begin{equation}
\hat{h}_T(k)=h_Y(k)k_{Y'}
\end{equation}
\\\\
Let us illustrate the procedure. Take $Y$ - a partition of $n-2$, say $Y=[n-4,2]$, then the only possibility to form a partition of $n$ of even length is to add a $Y'=[1,1]$.
This way we construct:
\begin{equation}
\hat{h}_{[n-4,2,1,1]}=h_{[n-4,2]}k_1^2
\end{equation}
\\\\%
%\begin{center}%
%\ytableaushort
%{\none \none[\dots],\none ab, \none c}
%* {4,3,2,1}
%\end{center}
%The corresponding polynomials are:
%\begin{equation}
%h_Y=[n-4,2]-[n-3,1] \ , \ D_{Y'}=D_1 D_1 , \ , \hat{h}%_T=[n-4,2,1,1]-[n-3,1,1,1]
%\end{equation}
%\\\\%
We want to inductively build the polynomials corresponding to the diagrams following the steps above. Our concern is the linear independence of the resulting polynomials, since our initial goal was counting the dimensions of the Hirota ideal.
Since we know the generating set $KP$, start with taking a 2 row 2 hook partition $Y$. The first non-trivial case is $|Y|=n-2$, which we have already described. The case $|Y|=n-3$ is treated exactly in the same manner, however, let us right it down for clearance. The partitions $Y$ have the form:
\begin{gather*}
[n-5,2]-[n-4,1]\\
\vdots \\
[n-5-i,2+i]-[n-4,1]\\
\end{gather*}
We can only multiply by a monomial $k_2 k_1$, denoted by $[2,1]$. Therefore, by combining these partitions we get the general form:
\begin{equation}\label{hh}
\hat{h}_{[n-5-i,2+i,2,1]}=[n-5-i,2+i,2,1]-[n-4,2,1,1]
\end{equation}
Now on the next step additional actions are needed. A general polynomial in $\mathcal{Y}_e$ for $|Y|=n-4$ will have a form:
\begin{equation}
[n-6-i,2+i]-[n-5,1]
\end{equation} Here we have two diagrams, that we can attach to the bottom:
\begin{equation}
Y'=[3,1] \leftrightarrow k_3 k_1 ,\qquad Y'=[1,1,1,1] \leftrightarrow k_1^4.
\end{equation}
While following our construction we encounter a polynomial:
\begin{equation}
[n-6,3,2,1]-[n-5,3,1,1],
\end{equation}
which we get by attaching $[3,1]$ to $[n-6,2]-[n-5,1]$. We notice that the first summand has actually already appeared in \eqref{hh}, for $i=1$. Even though the second terms are different this hints us that the polynomials built on this step might not be linearly independent with the ones constructed in the first step.\\\\
We want to construct a set of polynomial which are \textbf{explicitly independent}.
\\\\
To do this, notice that we can find a polynomial $\hh_T$ for $|Y|=n-2$, whose first term will be $[n-4,2,1,1]$ and one with $[n-5,3,1,1]$. These are the polynomials $\hh_{[n-4,2,1,1]}$ and $\hh_{[n-5,3,1,1]}$. This leads to a linear transformation and a redefinition:\\\\For $|Y|=n-3$:
\begin{equation}
\begin{split}
\hh_{[n-5-i,2+i,2,1]} \longrightarrow &\hh_{[n-5-i,2+i,2,1]}+\hh_{[n-4,2,1,1]}=\\&=[n-5-i,2+i,2,1]-[n-3,1,1,1]
\end{split}
\end{equation}
For $|Y|=n-4$:
\begin{equation}
\begin{split}
\hh_{[n-6-i,2+i,2,1]} \longrightarrow &\hh_{[n-6-i,2+i,2,1]}+\hh_{[n-5,3,1,1]}=\\&=[n-6-i,2+i,2,1]-[n-3,1,1,1]
\end{split}
\end{equation}
This transformation makes the constructed polynomials explicitly independent since the all have different monomials as their first term.
\\\\
Let us prove the general statement.
\begin{Proposition}
For all $T$ - $k$-hook Young diagrams,such that $|T|=n$, with $l(T)$ - even and $k>1$, there exists such transformation of $\hh_T$ such that:
\begin{equation}
\hat{h}_T= h_Y k_{Y'} \ , \quad T=[Y,Y']
\end{equation}
are linear independent.
\end{Proposition}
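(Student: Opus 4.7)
The plan is to establish a triangular structure on the family $\{\hat{h}_T : T \in \mathcal{Y}_e, |T|=n\}$ where the leading monomial of the modified $\hat{h}_T$ is $k_T$ itself, and all remaining monomials correspond to Young diagrams lying outside $\mathcal{Y}_e$ (specifically, single-hook diagrams). Distinct $T$ then give distinct leading monomials, which forces linear independence.

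First I would write out the two ``low-weight'' terms of $\hat{h}_T$ explicitly. Since $h_Y(k)=\tfrac{1}{y_1 y_2}k_{y_1}k_{y_2} - \tfrac{1}{|Y|-1}k_{|Y|-1}k_1 + (\text{terms with }\ge 4 \text{ factors})$, multiplying by the monomial $k_{Y'}$ yields
\begin{equation*}
\hat{h}_T \;=\; c_T\, k_T \;-\; c'_T\, k_{T_*} \;+\; (\text{monomials with }\ge 4+l(Y')\text{ factors}),
\end{equation*}
where $T_*$ is the Young diagram obtained by sorting the parts of $[|Y|-1,1,Y']$ in decreasing order. The crucial dichotomy is: if $Y'$ contains at least one part $\ge 2$, then $T_*$ has at least two parts $\ge 2$ and even length, so $T_*\in\mathcal{Y}_e$ (this creates a possible collision with the leading of another polynomial in the family); otherwise $Y'$ consists entirely of $1$'s and $T_*=[|Y|-1,1,1,\ldots,1]$ is a single-hook diagram, hence outside $\mathcal{Y}_e$ and harmless.

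Next I would run an iterative elimination procedure. Define a grading on the family by $\mu(T):=$ (number of parts of $T$ that are $\ge 2$). For $T\in\mathcal{Y}_e$ we have $\mu(T)\ge 2$; the elimination swap replaces $k_{y_1}k_{y_2}k_{Y'}$ (with $y_2\ge 2$) by $k_{|Y|-1}k_1 k_{Y'}$, which strictly decreases $\mu$ of the subleading monomial. Process the family in order of increasing $\mu$: whenever the subleading $k_{T_*}$ of $\hat{h}_T$ still lies in $\mathcal{Y}_e$, replace $\hat{h}_T$ by $\hat{h}_T + (c'_T/c_{T_*})\hat{h}_{T_*}$. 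This kills the offending term and reintroduces a new subleading with strictly smaller $\mu$. Since $\mu$ is a non-negative integer and the iteration exits $\mathcal{Y}_e$ as soon as $\mu<2$, the procedure terminates after finitely many steps. Importantly, the leading term $c_T k_T$ is never disturbed, because every $\hat{h}_{T_*}$ we add has $T_*\ne T$ (the subleading of $\hat{h}_T$ is distinct from $T$ since $|Y|-1>y_1$).

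The modified polynomials now enjoy the property that their unique monomial indexed by a diagram in $\mathcal{Y}_e$ is $c_T k_T$. Since $T\mapsto k_T$ is injective and the $\hat{h}_T$'s have pairwise distinct leading monomials in the subspace spanned by $\{k_\Delta : \Delta\in\mathcal{Y}_e\cap\{|\Delta|=n\}\}$, any linear relation $\sum_T a_T \hat{h}_T=0$ forces all $a_T=0$ by reading off the $k_T$ coefficient. The main obstacle I expect is the bookkeeping needed to verify two points of the termination argument: that processing in the order of increasing $\mu$ does not re-introduce a monomial $k_{T_*}$ with $\mu(T_*)$ \emph{greater} than the current value (ruled out by the strict decrease described above), and that the freedom in ordering operations does not spoil the final triangular pattern. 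Both follow from the key invariant that each elimination step strictly decreases the $\mu$-value of the subleading term while preserving the leading term $k_T$.
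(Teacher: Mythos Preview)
Your elimination procedure is the same idea as the paper's, with the cosmetic difference that you induct on $\mu(T)$ (the number of parts $\ge 2$) while the paper inducts on $k=|Y'|$ after first stratifying by $l(T)$. Both parameters strictly drop under the swap $[y_1,y_2,Y']\mapsto[|Y|-1,1,Y']$, so either works to drive the length-$l(T)$ subleading term to single-hook shape.

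The gap is in your concluding claim that after modification ``their unique monomial indexed by a diagram in $\mathcal{Y}_e$ is $c_T k_T$''. This is false. The terms you bracketed as ``monomials with $\ge 4+l(Y')$ factors'' can themselves be indexed by multi-hook diagrams: already at $n=6$ the polynomial $h_{[3,3]}$ carries the monomial $k_2^2k_1^2=k_{[2,2,1,1]}$, and $[2,2,1,1]\in\mathcal{Y}_e$ (compare the explicit $X^e_{6,3}$ listed earlier). So ``reading off the $k_T$ coefficient'' in a putative relation $\sum a_T\hat h_T=0$ does not isolate $a_T$: polynomials $\hat h_{T'}$ with $l(T')<l(T)$ contribute to that coefficient through their higher-length tails. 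Your stated plan, that all remaining monomials lie outside $\mathcal{Y}_e$, is therefore not achievable.

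The repair is exactly the length stratification the paper performs up front. Every monomial in the modified $\hat h_T$ has length $\ge l(T)$, and your elimination does establish that \emph{in the length-$l(T)$ slice} the only $\mathcal{Y}_e$-indexed monomial is $k_T$. Hence if $\sum a_T\hat h_T=0$, restrict to the minimal length $l$ appearing among the $T$ with $a_T\ne 0$: only those $T$ with $l(T)=l$ contribute to length-$l$ monomials, and projecting that slice onto $\mathcal{Y}_e$-indexed monomials gives $\sum_{l(T)=l}a_Tc_Tk_T=0$, forcing $a_T=0$. This is precisely why the paper separates the cases $l(T)=4,6,8,\ldots$ before running the induction; your argument needs the same separation (or an equivalent monomial order with length as the primary key) to close.
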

\begin{proof}
Along the lines of the given examples we want to prove that for any given $Y \vdash n-k$ and $Y'\vdash k$ such that $T=[Y,Y']$ is a partition too, the corresponding polynomial $\hat{h}_T=h_Y k_{Y'}$ is defined only by the first summand, which is precisely $k_T$.\\
Let us start by noticing that in case, when $Y'$ have different lengths, the resulting polynomials will for sure be linearly independent. Therefore we can consider the cases of $l(T)=4,6,8,\ldots$ separately.\\
1) Let's first construct the case $l(T)=4 \Leftrightarrow l(Y')=2$. Denote $Y'=[b_1,b_2], b_1 \geq b_2 \geq 1$, then 
\begin{equation}
\hat{h}_T=[n-k-2-i,2+i,b_1,b_2]-[n-k-1,b_1,b_2,1] \ , \quad  \text{if } 2+i\geq b_1
\end{equation}
We have demonstrated the cases $k=2,k=3$ before. Now suppose all polynomials for $k<k_0$ are brought in the form:
\begin{equation}
\hat{h}_T=[T]-[n-3,1,1,1]
\end{equation}
Then to do the same for $k=k_0$ we must find the term $Y=[n-k_0-1,b_1,b_2,1]$ in the already constructed polynomials simply by solving a simple equation 
\begin{equation}\label{simple}
 n-k-2-i=n-k_0-1 \ , \ b_1=2+i.
 \end{equation}
Therefore we find the needed term for $i=b_1-2 \geq 0$ and $k=k_0-b_1+1$, with $k_0>k\geq 2$ since $b_1 \leq k_0-1$. Finally we set the corresponding $b'_1=b_2,b'_2=1$ .That way by induction we get all the polynomials in the desired form and their independence becomes explicit. The inequalities guarantees the existence of such term.\\ \\
Let us illustrate the last step by another example.
Take, for instance, $k_0=5$ and $Y'=[3,2]$ and assume for $k<5$ everything is sorted out. We begin with:
\begin{equation}
\hat{h}_T=[T]-[n-6,3,2,1]
\end{equation}
We have $i=1$,$k=3, b_1'=2, b_2'=1$. This means we have to look for a polynomial of the form:
\begin{equation}
\hat{h}_T=[n-3-2-1,2+1,2,1]-[n-3,1,1,1]=[n-6,3,2,1]-[n-3,1,1,1],
\end{equation}
which we should have built in the previous steps of the induction. 
\\\\
2) The cases $l(Y') > 2$ are treated in the same way. Mainly for each such case $l(Y')=l$ we will have a minimal $k=l$, where such $|Y'|=k$ is possible. The corresponding monomial will be $k_{Y'}=k_1^{l}$.\\
A general polynomial will look like:
\begin{equation}
h_T=[n-k-2-i,2+i,b_1,\ldots b_l]-[n-k-1,b_1,\ldots,b_l,1]
\end{equation}
We want to bring them all in the form:
\begin{equation}
\hat{h}_T=[n-k-2-i,2+i,b_1,\ldots b_l]-[n-k'-1,1^l,1]=[T]-[n-k-1,1^l,1]
\end{equation}
Again at each level $k_0$ we want to cancel the second term by a first term of some polynomial with $k<k_0$. Then by induction we can bring all the polynomials in the desired form. Let us see how this is realized.\\
At level $k_0$ we want to cancel the term:
\begin{equation}
[n-k_0-1,b_1,\ldots,b_l,1] \text{ by } [n-k-2-i,2+i,b'_1,\ldots b'_l]
\end{equation}
We see that we need to solve the same equations ~\eqref{simple} to see that such cancellation is indeed possible. 
\end{proof}

\end{innerproof}
\end{proof}
As announced lemma \eqref{oddeven} is a corollary of the theorem.
\begin{Corollary}\label{oddeven2}
The space of odd solutions $\Al^o_n$ is generated by even solutions of lower orders.
\end{Corollary}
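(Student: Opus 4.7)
The plan is to adapt the triangular basis construction from Lemma \ref{main2} to the odd-length setting and then invoke a dimension count. First, I would observe that multiplication by any Casimir monomial $C_{Y''}$ with $l(Y'')$ odd sends $\Al^e$ into $\Al^o$: if $X^e \in \Al^e_m$ vanishes on every single-hook diagram $R$, then so does $C_{Y''}\cdot X^e$, while the parity of the length of every monomial $C_\Delta$ flips from even to odd. Hence there is a natural multiplication map
\begin{equation*}
\bigoplus_{\substack{|Y''|+m=n \\ l(Y'') \text{ odd}}} C_{Y''}\cdot \Al^e_m \;\longrightarrow\; \Al^o_n,
\end{equation*}
and the Corollary is precisely the assertion that this map is surjective.

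Next, I would construct an explicit spanning family indexed by multi-hook Young diagrams $T$ of size $n$ and odd length. For each such $T$, the multi-hook condition forces $r_2\ge 2$, so the first two rows assemble into a 2-row 2-hook diagram $Y=[r_1,r_2]\in\Y_2$, and the remaining rows form $Y''$ of odd length $l(T)-2$. Define $\hat{h}_T := h_Y \cdot C_{Y''}$, which by the first step lies in $\Al^o_n$. This parallels the decomposition $T=[Y,Y']$ used in the proof of Lemma \ref{main2}, with the only difference that $l(Y'')$ is now odd rather than even.

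Linear independence of the $\hat{h}_T$ would follow from the same inductive triangular transformation: the expansion $h_Y = \tfrac{1}{y_1 y_2}C_{y_1}C_{y_2} - \tfrac{1}{|Y|-1}C_{|Y|-1}C_1+\ldots$ means that $\hat{h}_T$ has leading monomial $C_T$ and a single distinguished subleading monomial of shape $C_{[|Y|-1,\,1,\,Y'']}$. One then runs the induction on $k=l(Y'')$ exactly as in Lemma \ref{main2}: at each stage the spurious subleading diagrams are cancelled by leading terms of $\hat{h}_{T'}$ constructed at earlier stages, so after the triangular redefinitions each corrected $\hat{h}_T$ has the unique leading monomial $C_T$. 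Since distinct $T$ yield distinct leading monomials, the family $\{\hat{h}_T\}$ is linearly independent.

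The Corollary then follows by a dimension count: the number of multi-hook Young diagrams of size $n$ and odd length equals $p_o(n)-\lfloor(n+1)/2\rfloor$, which is exactly $\dim \Al^o_n$ by the dimension formula \eqref{Dim}. Therefore the $\hat{h}_T$ form a basis of $\Al^o_n$, and every odd solution is a linear combination of products $C_{Y''}\cdot h_Y$ with $h_Y \in \Al^e$, as claimed. I expect the main obstacle to be verifying that the parity constraint $l(Y'')$ odd propagates cleanly through the cancellation induction: one must check that at each stage the cancelling diagrams $T'$ themselves have odd length and were constructed at strictly earlier steps, so the induction closes within $\Al^o$ without ever borrowing an even-length cancelling diagram.
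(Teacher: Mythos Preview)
Your argument is correct, but the paper's proof is considerably shorter because it cuts the diagram at the opposite end. Instead of splitting $T\in\Y_o$ into its top two rows $Y\in\Y_2$ and an odd-length tail $Y''$, the paper removes only the \emph{bottom} row $T_{l(T)}$, leaving $T'=[r_1,\ldots,r_{l(T)-1}]$. Since $l(T)\ge 3$ is odd, $T'$ has even length and still satisfies $r_2\ge 2$, so $T'\in\Y_e$. One then sets $h_T:=h_{T'}\cdot C_{T_{l(T)}}$, where $h_{T'}$ is the already-triangularised basis element of $\Al^e$ produced in Lemma~\ref{main2}, with distinguished monomial $C_{T'}$. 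Multiplying by the single Casimir $C_{T_{l(T)}}$ gives distinguished monomial $C_T$, and distinct $T$ give distinct such monomials, so independence is immediate and the dimension count finishes the proof.

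The practical difference is that the paper \emph{reuses} the triangular basis of $\Al^e$ rather than rebuilding it: by peeling off one row at the bottom you stay inside $\Y_e$ and inherit the leading-monomial property for free, whereas your decomposition $T=[Y,Y'']$ forces you to rerun the entire cancellation induction of Lemma~\ref{main2} in the odd-length sector and to verify (as you correctly anticipated) that the parity constraint closes. Your route works and in fact proves slightly more, namely that the $2$-row generators $h_Y$ alone already suffice to generate $\Al^o$ over odd Casimir monomials; the paper's route gives the corollary in two lines.
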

\begin{proof}
We have naturally labelled all the even solutions of the Alexander equations by Young diagrams of even length having more than 1 hook - $\Y_e$. A solution, labelled by $T\in\Y_e$ contains a unique term $C_T$. Now, let's interpret the quantity \eqref{Dim} for the odd case. Again it is exactly the number of diagrams of odd length and having more than one hook - $\Y_o$.\\
This means we may use the same idea. Take any diagram $T\in\Y_o$, and cut the lowest row $T_{l(T)}$, this way we get a diagram $T'$ from $\Y_e$. This means $h_{T'}(C)$ is a even solution of the Alexander equations. Hence, the product:
\begin{equation}
h_T(C)=h_{T'}(C)C_{T_{l(T)}}
\end{equation}
is a odd solution to the Alexander system. Of course all this may be done with any diagram from $\Y_o$. Moreover, these odd solutions are also independent since they have a unique term, corresponding to the diagram, they are labelled by.

\begin{center}
\begin{tikzpicture}[>=triangle 45,font=\sffamily]
    \node (1) at (0,0) {$\ytableausetup{smalltableaux}
\ytableausetup{aligntableaux = center}
T \ = \ \ydiagram{5,4,3,3,2}$};
    \node (2) [above right=-1cm and 0.9cm of 1]  {$\ytableausetup{smalltableaux}
\ytableausetup{aligntableaux = center}\ydiagram{5,4,3,3} \ = \ T'$};% 1.5cm below, 0.7cm to the left 
    \node (3) [below=0.5cm of 2] {$\ytableausetup{smalltableaux}
\ytableausetup{aligntableaux = center}\ydiagram{2} \ = \ T_{l(T)}$};
    
    \draw [semithick,->] (1) -- (2); 
    \draw [semithick,->] (1) -- (3);
\end{tikzpicture}
\end{center}
\end{proof}
The fact that we only need to study the even part of the solutions is in tact with the triviality of the odd Hirota equations. Hence really there is a correspondence between the whole space $\Al$ and the dispersion relations of the soliton $\tau$-functions.
Another obvious, however important corollary is the statement announced above.
\begin{Corollary}\label{generatingcorollary}
The space of all solutions to the Alexander equations is generated by a set labelled by two row, two hook diagrams $\mathcal{Y}_2$:
\begin{equation}
    X_{[y_1,y_2]}(R)= \dfrac{1}{y_1 y_2}C_{y_1}(R)C_{y_2}(R)+ \ldots , \qquad y_1 \geq y_2 \geq 2
\end{equation}
while the linearly independent solutions are labeled by diagrams with more then one hook $\mathcal{Y}=\mathcal{Y}_e\bigoplus \mathcal{Y}_o$:
\begin{equation}
    X_\lambda = X_{[\lambda_1,\lambda_2]}  C_{\lambda'}
\end{equation}
where $\lambda'= [\lambda_3, \ldots , \lambda_{l(\lambda)}]$. 
\end{Corollary}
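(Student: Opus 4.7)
My plan is to deduce this corollary directly from the main theorem (through Lemmas \ref{main1} and \ref{main2}) together with Corollary \ref{oddeven2}, which have already assembled all the moving parts. The theorem identifies $\Al^e$ with the Hirota ideal $\KP$, and the explicit inductive basis built in Lemma \ref{main2} pinpoints a basis of $\Al^e_n$ indexed by even-length Young diagrams with more than one hook; Corollary \ref{oddeven2} then extends this to the odd part. What remains is to read off the shape of each basis element and verify that it matches the formula in the statement.

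First I would handle the even case $\lambda \in \Y_e$. By Lemma \ref{main2}, after the triangular redefinition, every $T \in \Y_e$ with $|T|=n$ is assigned a basis polynomial of the form $\hat{h}_T = h_Y \cdot k_{Y'}$, where $T = [Y, Y']$ is split into its first two rows $Y = [\lambda_1, \lambda_2] \in \Y_2$ and its remaining rows $Y' = [\lambda_3, \ldots, \lambda_{l(\lambda)}]$. Under the substitution $k_i \mapsto C_i(R)$ that Lemma \ref{main1} legitimises, the generator $h_Y$ becomes precisely the polynomial $X_{[\lambda_1,\lambda_2]}$ described in \eqref{generators}, while $k_{Y'}$ becomes the Casimir monomial $C_{\lambda'}$. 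Hence $X_\lambda = X_{[\lambda_1,\lambda_2]}\, C_{\lambda'}$ for $\lambda \in \Y_e$, and by the linear independence established in Lemma \ref{main2} these form a basis of $\Al^e$.

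Second, I would extend to the odd case via Corollary \ref{oddeven2}. There it is shown that for any $\lambda \in \Y_o$, the odd basis polynomial is obtained as $X_\lambda = X_{\lambda''}\, C_{\lambda_{l(\lambda)}}$, where $\lambda''$ is the diagram in $\Y_e$ produced by removing the last row of $\lambda$. Substituting the even formula for $X_{\lambda''}$ gives
\begin{equation}
X_\lambda \;=\; X_{[\lambda_1,\lambda_2]} \cdot C_{[\lambda_3,\ldots,\lambda_{l(\lambda)-1}]} \cdot C_{\lambda_{l(\lambda)}} \;=\; X_{[\lambda_1,\lambda_2]}\, C_{\lambda'},
\end{equation}
which is exactly the formula of the corollary for $\lambda \in \Y_o$. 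Combined with the even case this yields a basis of $\Al = \Al^e \oplus \Al^o$ indexed by $\Y = \Y_e \sqcup \Y_o$.

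The generating statement follows at once: every basis vector $X_\lambda$ factors as a two-row two-hook polynomial $X_{[y_1,y_2]}$ (with $[y_1,y_2] \in \Y_2$) times a monomial in the ordinary Casimirs $C_k$, so the finite set $\{X_{[y_1,y_2]}\}_{[y_1,y_2] \in \Y_2}$ multiplicatively generates $\Al$ over the ring of Casimir polynomials. Since all the analytic work is already absorbed into the main theorem and its lemmas, the only point that needs a little care is the combinatorial bookkeeping of the decomposition $\lambda = [[\lambda_1,\lambda_2],\lambda']$; in particular one must check that removing the first two rows of any $\lambda \in \Y$ always leaves a valid partition $\lambda'$ (possibly empty), so that $C_{\lambda'}$ is unambiguously defined, and that the resulting labelling of basis elements is a bijection onto $\Y$. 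This is the only mildly delicate step; no further obstacle arises.
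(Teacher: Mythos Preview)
Your proposal is correct and mirrors the paper's own reasoning: the paper treats this corollary as ``obvious'' from the main theorem, and what you have done is simply spell out how Lemma~\ref{main2} supplies the even basis $\hat h_T = h_Y\,k_{Y'}$, Lemma~\ref{main1} converts $k_i$ to $C_i(R)$, and Corollary~\ref{oddeven2} extends to $\Y_o$. The combinatorial check you flag---that removing the first two rows of any $\lambda\in\Y$ leaves a valid (possibly empty) partition $\lambda'$---is immediate from the weakly-decreasing row condition, so there is no genuine obstacle.
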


\section{Discussion}
In this paper we have refined the statement made in paper \cite{MMMMS} and then prove it. Our theorem connects the solutions of the Alexander equations with the dispersion relations of 1-soliton tau-function. The special form of the group factor of the Alexander equations is the dispersion relation of the KP soliton. In these terms the property of the polynomial holds because the special form of the Casimir invariants is exactly the one needed to solve these relations. From our point of view, this idea brings a clearer understanding to the initially mysterious coincidence between the Alexander polynomial group factors and the KP Hirota equations/$\tau$-functions.
\\
Our results open various new perspectives:
\begin{itemize}
\item Does the property of the colored Alexander polynomial generalize in some way to diagrams with two and more hooks? If it does it should reflect on the KP side of the story and modify the construction.

\item Let us consider an n-soliton $\tau$-function. This would mean \eqref{tau} finding $n$ different solutions to the dispersion relation, i.e. considering several different values of the Casimir invariants. This might be a way to invariants of links.

\item As mentioned not all of the solutions of the Alexander equations enter the Alexander polynomial itself. Therefore there is a set of symmetry conditions \cite{MST}, which sort out some of the solutions (See section 3.3). It is interesting to find out how this new symmetry changes the integrable property.

\item The supersymmetric nature of these solutions may also open an interesting way of looking at the Alexander polynomial.

\item We found that the two limits of the HOMFLY polynomials are connected to two types of $\tau$-functions of the KP hierarchy and, hence, we are closer to understating the integrable properties of the HOMFLY polynomial for generic variables.

\item Completely different approach to study various properties of integrable tau-functions is given by topological recursion \cite{AMM/CEO}. In the context of polynomial knot invariants it is developed only for torus knots \cite{BEM,DBPSS}. However, even in this special case, it would be interesting to reproduce our results using topological recursion methods. 

\item Colored HOMFLY polynomials can be described in terms of quantum R-matrices. The dependence of representation is fully encoded by quantum 6j-symbols, hence our results are the result of some symmetries of these 6j-symbols \cite{symm}, probably, related to the eigenvalue conjecture \cite{evc}. 

\end{itemize} 

\section*{Acknowledgments}
This work was funded by the Russian Science Foundation (Grant No.16-12-10344).  We are grateful to A.Morozov and A.Mironov for stimulating discussions and helpful insights. A.S. is especially thankful to A.Zabrodin, S.Natanzon and A.Marshakov for explanations and advice. V.M. is indebted to P. Dunin-Barkowski for carefully reviewing and commenting the paper.

\end{document}